\theoremstyle{plain}
\newtheorem{remark}{Remark}
\newtheorem{lemma}{Lemma}
\newtheorem{corollary}{Corollary}
\theoremstyle{plain}
\newtheorem{theorem}{Theorem}
\theoremstyle{definition}
\newtheorem{definition}{Definition}
\theoremstyle{definition}
\theoremstyle{definition}
\begin{document}

\title{
The Design and Analysis of a Mobility Game
}

\author{Ioannis Vasileios Chremos, \IEEEmembership{Student Member, IEEE}, and Andreas A. Malikopoulos, \IEEEmembership{Senior Member, IEEE}
\thanks{This work was supported by NSF under Grant CNS-2149520.}
\thanks{The authors are with the Department of Mechanical Engineering, University of Delaware, Newark, DE 19716 USA. {\tt\small{\{ichremos,andreas\}@udel.edu.}}}
}

\maketitle

\begin{abstract}

In this paper, we study a routing and travel-mode choice problem for mobility systems with a multimodal transportation network as a ``mobility game" with coupled action sets. We develop a game-theoretic framework to study the impact on efficiency of the travelers' behavioral decision-making. In our framework, we introduce a mobility ``pricing mechanism," in which we model traffic congestion using linear cost functions while also considering the waiting times at different transport hubs. We show that the travelers' selfish actions lead to a pure-strategy Nash equilibrium. We then perform a Price of Anarchy analysis to establish that the mobility system's inefficiencies remain relatively low as the number of travelers increases. We deviate from the standard game-theoretic analysis of decision-making by extending our modeling framework to capture the subjective behavior of travelers using prospect theory. Finally, we provide a simulation study as a proof of concept for our proposed mobility game.

\end{abstract}

\begin{IEEEkeywords}

Game theory, potential game, Nash equilibrium, price of anarchy, prospect theory, transportation networks, smart mobility.

\end{IEEEkeywords}

%
%
%
%
%

\section{INTRODUCTION}
\label{Section:Introduction}

\subsection{Motivation}

\IEEEPARstart{C}{ommuters} in big cities have continuously experienced the frustration of congestion and traffic jams \cite{Varaiya1993}. Travel delays, accidents, and road altercations have consistently impacted the economy, society, and the natural environment in terms of energy and pollution \cite{colini_baldeschi2017}. One of the pressing challenges of our time is the increasing demand for energy, which requires us to make fundamental transformations in how our societies use and access transportation \cite{zhao2019enhanced}. Thanks to the technological evolution of mobility (e.g., electrification of vehicles, smart mobility with self-driving cars, and improved vehicle sensor technology \cite{Colombo2017}) it is highly expected that we will be able to eliminate congestion while significantly increase mobility efficiency in terms of energy and travel time \cite{sarkar2016}. Several studies have shown the benefits of \emph{emerging mobility systems} (EMS) (e.g., ride-hailing, on-demand mobility services, shared vehicles, self-driving cars) in reducing energy and alleviating traffic congestion in a number of different transportation scenarios \cite{Papageorgiou2002,zardini2020,chalaki2020TCST,Ntousakis:2016aa,Malikopoulos2020}. For a thorough review of models and the possible methods and techniques for smart mobility-on-demand systems see \cite{Zardini2022}.

The cyber-physical nature (e.g., data and shared information) of EMS is associated with significant control challenges and gives rise to a new level of complexity in modeling and control \cite{Ferrara2018}. Research efforts over the last twenty years have tended to focus on the technological dimension. What is missing is a complementary theoretical study to the broader social implications of smart mobility. The impact of selfish social behavior in routing networks of regular and autonomous vehicles has been studied in \cite{Mehr2019,Lazar2021,Biyik2021}. Other efforts have addressed ``how people learn and make routing decisions" with behavioral dynamics \cite{Krichene2015,Krichene2016}. A game-theoretic framework using sequential games was proposed to study the socioeconomic interactions as well as the different tradeoffs that emerge between the mobility stakeholders of a mobility ``ecosystem" \cite{Zardini2021}. It seems though that the problem of how automation in mobility will affect the tendency to travel and decision-making has not been adequately approached yet. In a recent study \cite{Harb2018}, it was shown that when daily commuters were offered a convenient and affordable taxi service for their travels, a change of behavior was noticed; the commuters adjusted their travel behavior and activities and used the taxi service considerably more often leading to an 83\% overall increase in vehicle miles traveled. Along with other similar studies \cite{Auld2017,Bissell2018} this shows that EMS most probably may affect people's tendency to travel and incentivize them to use cars more frequently, which potentially can also lead to a shift away from public transit.

In this paper, we are interested in one open question: \emph{Can we develop an efficient multimodal mobility system that can enhance accessibility while controlling the ratio of travel demand over capacity and improve indirectly the well-being and safety of travelers, passengers, and drivers?} To address this question, we first need to understand the behavioral interactions of travelers with different modes of transportation along with the implications to system efficiency. Thus, we study the game-theoretic interactions of travelers seeking to travel in a transportation network comprised of roads used by different modes of transportation (e.g., cars, buses, light rail, and bikes). A key characteristic of our approach is that we adopt the Mobility-as-a-Service (MaaS) concept, i.e., a multimodal mobility system that handles centrally the travelers' information and provides travel services (e.g., navigation, location, booking, payment). Our goal is to provide a game-theoretic framework that captures the most significant factors of a traveler's decision-making in a transportation network under two different behavioral models.

\subsection{Literature Review}

One of the standard approaches to alleviate congestion in a transportation system has been the management of demand size due to the shortage of space availability and scarce economic resources by imposing an appropriate \emph{congestion pricing scheme} \cite{Pigou2013,Ferguson2021}. Such an approach focuses primarily on intelligent and scalable traffic routing, in which the objective is to optimize the routing decisions in a transportation network and guide and coordinate travelers in choosing these routes \cite{Papageorgiou2003,Bui2012,Zhao2018,Korilis1997,Silva2013}. Game theory has been one of the standard tools that can help us investigate the impact of selfish routing on efficiency and congestion \cite{Marden2015,Hota2020}. By adopting a game-theoretic approach, advanced systems have been proposed to assign travelers concrete routes or minimize all travelers' travel time while studying the system's Nash equilibrium (NE) under different congestion pricing mechanisms \cite{Brown2017,Chandan2019,salazar2019,Chremos2020_ITSC,Chremos2020_CDC,Chremos2021_ECC,Chremos2021_Chapter,Dave2020SocialMedia,Chremos2022_ACC}.

An important and key theoretical approach in alleviating congestion is \emph{routing/congestion games} \cite{Rosenthal1973,Cominetti2009,Hao2018,Lazar2021,Tanaka2021}, which are a generalization of the standard resource-sharing game of an arbitrary number of resources in a network with a finite number of travelers. For example, each traveler may contribute a certain amount of traffic from a source to a destination and affect the overall congestion on a route, thus increasing the travel time for all other travelers. Another important class of games is \emph{potential games}, first introduced in \cite{Monderer1996}, which represent an important branch of game theory. In a potential game the incentive of all players to change their strategy can be expressed using a single global function called the ``potential" function. The potential function depends on the action sets of all players and captures the changes of utility as the actions vary. Potential games have been used extensively in wide-ranging applications; for example, tax schemes of public goods \cite{Konishi1998}, economics of shallow lakes \cite{Maler2003}, electricity markets \cite{Garcia2002}. Routing/congestion and potential games have played an instrumental role in understanding competition over shared resources. Both classes of games have been studied in multiple disciplines to model transportation and communication networks \cite{Orda1993,Roughgarden2015,Sekar2020,Ferguson2021}, common-pool resource games in economics \cite{Ostrom1994}, and resource dilemma problems in psychology \cite{Rapoport1992,Budescu1995}.

In cases where a resource fails (e.g., if a road is viewed as a resource then too many travelers using it may lead to a traffic jam) or there is uncertainty over the resource's quantity/quality, the players of a game cannot collectively reach an efficient equilibrium. Resources with negative congestion externalities have been widely considered in congestion games \cite{Hew2007,Barrera2014}. In this regard, our work deviates from the literature as we consider the overutilization of multiple different resources on each route in the transportation network as well as considering additional indirect costs to the travelers' (e.g., waiting cost at a transport hub). In our modeling framework, we consider \emph{negative congestion externalities} by supposing that if the number of co-travelers that utilize the same route or mode of transportation increases, then a traveler's utility decreases too \cite{Amin2011,Zhong2020}.

So far, most of the existing game-theoretical literature in transportation and routing/congestion games assumes that the players' behavior follows the \emph{rational choice theory}, i.e., each player is a risk-neutral, selfish, utility maximizer \cite{Shoham2008}. This seems to turn most transportation models quite unrealistic, as unexpected travel delays can lead to uncertainty in a traveler's utility. More irrational decision-making over uncertainties and risks in utility can play a significant role, and its study can help us understand how large-scale systems perform inefficiently. There is strong evidence with empirical experiments that show how a human's choices and preferences systematically may deviate from the choice and preferences of a game-theoretic player under the rational choice theory \cite{Kahneman1979,Camerer2004}. This is because real-life decision-making is rarely truly rational, and biases affect how we make decisions. For example, humans compare the outcomes of their choices to a known expected amount of utility (called reference) and decide based on that reference whether their utility is a gain or loss. \emph{Prospect theory} has laid down the theoretical foundations to study such biases and the subjective perception of risk in utility of humans (see the seminal papers \cite{Kahneman1979,Tversky1992}). Prospect theory has been recognized as a closer-to-reality behavioral model for the decision-making of humans and has been used in a wide range of applications and fields \cite{Camerer2004,Barberis2013}, including recent studies in engineering \cite{El_Rahi2016,Nar2017,Etesami2018}. There has also been considerable work at the intersection of transportation studies and prospect theory \cite{de_Palma2006,Lam2001,Etesami2020}.

\subsection{Contribution of this Paper}

In this paper, we propose a game-theoretic framework for the travelers' routing and travel-mode choices in a multimodal transportation network. We study the existence of a NE and the resulting inefficiencies of the travelers' decision-making. One of the most significant aims of our work is to show that although we cannot guarantee equilibrium uniqueness, we can provide an upper bound for the inefficiency that arises from the individual strategic interactions of the travelers. In particular, our modeling framework (called mobility game), considers the impacts of ``negative congestion externalities" and waiting costs in the travelers' decision-making. That way, we offer an improved look at the socioeconomic factors that can affect the efficient and sustainable distribution of travel demand in a transportation network with multiple different modes of transportation (e.g., car, bus, light rail, bike). Moreover, we study the travelers' decision-making under two behavioral models: (1) rational choice theory, where players are selfish and seek to maximize only their own utility; and (2) prospect theory, where the players' biases and subjectivity are taken into account when decisions are made under risk.

The features that distinguish our work from the state of the art are as follows:
    \begin{enumerate}
        \item we model the interactions between travelers using a mobility game, a combination of a routing and a potential game with travel-mode choices and coupled action sets (see Section \ref{Section:Formulation});
        \item we take into account the traffic congestion cost factors using linear cost functions and the waiting time of travelers at different transport hubs; each transport hub allows a traveler to choose any of available modes of transportation to utilize for their travel needs (see Section \ref{Section:Formulation});
        \item we introduce a mobility pricing mechanism to control travel demand and study the inefficiencies at a NE by showing that a NE exists (Theorem \ref{thm:existence_NASH}) and deriving a bound that remains small enough as the number of travelers increases (Theorem \ref{thm:PoA}); and
        \item we incorporate a behavioral model (prospect theory) for the travelers' decision-making under the uncertainty of the transport hubs' budgets (Theorem \ref{thm:existence_NASH_PT}).
    \end{enumerate}

\subsection{Organization of the Paper}

The remainder of the paper is structured as follows. First, in Section \ref{Section:Formulation}, we present the mathematical formulation of our mobility game, which forms the basis of our theoretical study in this paper. Then, in Section \ref{Section:Analysis&Properties}, we derive the properties of our mobility game, i.e., we show NE existence and we bound the Price of Anarchy (PoA) in Subsection \ref{Subsection:NashExistence&PoA}. Then we prove that a NE exists under prospect theory in Subsection \ref{Subsection:NashExistencePT}. We validate our theoretical results with numerical simulations in Section \ref{Section:SimulationResults}. Finally, in Section \ref{Section:Conclusion}, we draw conclusions and offer a discussion of future research.

\section{MODELING FRAMEWORK}
\label{Section:Formulation}

We consider a mobility system of two finite, disjoint, and non-empty sets, (1) the set of travelers $\mathcal{I}$, $|\mathcal{I}| = I \in \mathbb{N}_{\geq 2}$, and (2) the set of mobility services by $\mathcal{J}$, $|\mathcal{J}| = J \in \mathbb{N}$. For example, $j \in \mathcal{J}$ can represent either a car, a bus, a light rail vehicle, or a bike. We consider that in our mobility game, $I < J$. The set of all mobility services $\mathcal{J}$ can be partitioned to a finite number of disjoint subsets, each representing a specific type of a mobility service, i.e., $\mathcal{J} = \bigcup_{h = 1} ^ H \mathcal{J}_h$, where $H \in \mathbb{N}$ is the total number of subsets of $\mathcal{J}$. For example, if there are only two modes of transportation, say cars and buses, then $\mathcal{J} = \mathcal{J}_{1} \cup \mathcal{J}_{2}$, where $\mathcal{J}_1$ represents the subset of all available cars, and $\mathcal{J}_2$ represents the subset of all available buses.

\begin{definition}
    The set of all different types of services is $\mathcal{H} = \{1, \dots, H\}$, $H \in \mathbb{N}$, where each element $h \in \mathcal{H}$ represents a possible travel option. We denote the type of service $j$ used by traveler $i$ by $h_i \in \mathcal{H}$.
\end{definition}

For example, suppose $H = 4$. Then each element $h \in \mathcal{H}$ can be associated one-to-one to the elements of the set $\{ \text{car}, \text{bus}, \text{light rail}, \text{bike} \}$.

Naturally, each service can accommodate up to a some finite number of travelers that is different for each type of services. So, we expect the ``physical traveler capacity" of each service to vary significantly. 

\begin{definition}
    Each service $j \in \mathcal{J}$ is characterized by a current \emph{physical traveler capacity}, i.e., $\varepsilon_j \in \{ 0, 1, 2, \dots, \bar{\varepsilon}_j \}$, where $\bar{\varepsilon}_j \in \mathbb{N}$ denotes the maximum traveler capacity of service $j$.
\end{definition}

For example, one bus can provide travel services up to eighty travelers (seated and standing) compared to a bike-sharing company's bike (one bike per traveler).

Travelers seek to travel in a transportation network represented by a directed multigraph $\mathcal{G} = (\mathcal{V}, \mathcal{E})$, where each node in $\mathcal{V}$ represents a city area (neighborhood) with a ``transport hub," i.e., a central place where travelers can use different modes of transportation. Each edge $\mathcal{E}$ represents a sequence of city roads with public transit lanes. For our purposes, we think of $\mathcal{G} = (\mathcal{V}, \mathcal{E})$ as a representation of a smart city network with a road and public transit infrastructure. In network $\mathcal{G}$, any traveler $i \in \mathcal{I}$ seeks to travel from an origin $o \in \mathcal{V}$ to a destination $d \in \mathcal{V}$ while making optional stops at a self-chosen transport hub $v_i \in \mathcal{V}$. So, on one hand, all travelers are associated with the same origin-destination pair $(o, d)$. On the other hand, travelers can make a stop along their route. Next, each type of mobility services $h \in \mathcal{H}$ is associated with a sequence of edges, i.e., a route that connects at least two nodes (or transport hubs). We say that there exists a set of routes for each traveler $i$ where each route connects their origin-destination pair $(o, d)$ and can be traveled by any mobility service. Formally, we have $\mathcal{P} ^ {(o, d)} \subset 2 ^ {\mathcal{E}}$ to denote the set of routes available to traveler $i$ in origin-destination $(o, d)$, where each route in $\mathcal{P} ^ {(o, d)}$ consists of a set of edges connecting $o$ to $d$.

Each traveler $i \in \mathcal{I}$ seeks to travel in network $\mathcal{G}$ using one of the available mobility services $j \in \mathcal{J}$ of type $h \in \mathcal{H}$. Thus, any traveler can choose the type of mobility service they prefer for their specific travel needs. This means that travelers compete with each other for the available services in the transportation network. For the purposes of this work, we restrict our attention to all available modes of transportation that use the road infrastructure. In addition, each transport hub (including the ones at $(o, d)$ allows travel by any mobility service (any mode of transportation), thus a traveler's travel preferences or needs can be satisfied by many and different mobility services (as one expects from a multimodal transportation network).

Selfish behavior, however, may lead to inefficiencies. Therefore, as part of our efforts to \emph{control} the inefficiencies that arise from the travelers' selfishness (and thus control the emergence of rebound effects), under our modeling framework we introduce the idea of a ``mobility pricing mechanism" to incentivize travelers to use services in public transit for their travel needs. Informally, each transport hub starts with a budget, collects payments for services, and then provides monetary incentives (pricing mechanism) to travelers to ensure a \emph{socially-efficient} utilization of services in the network. By ``socially-efficient" we mean that the endmost collective travel outcomes must achieve two objectives: (i) respect and satisfy the travelers' preferences regarding mobility, and (ii) ensure the alleviation of congestion in the system. We formalize the idea of our mobility pricing mechanism in the following definitions.

\begin{definition}\label{defn:tokens}
    Each traveler $i$ starts with a \emph{mobility wallet} represented in monetary units by $\theta_i \in [0, \bar{\theta}_i]$, where $\bar{\theta}_i \in  \mathbb{R}_{> 0}$ is the maximum amount of traveler $i$'s monetary units. Traveler $i$ uses their wallet $\theta_i$ to pay for their travels in network $\mathcal{G}$.
\end{definition}

\begin{definition}
    Any traveler $i$ is required to pay a ticket, called ``mobility payment," for using a mobility service in network $\mathcal{G}$. This mobility payment is given by some function $\pi_i : \mathcal{H} \times \mathbb{N} \to \mathbb{R}_{\geq 0}$, where $0 \leq \pi_i(h_i, \varepsilon_j) \leq \bar{\theta}_i$.
\end{definition}

Note that $\pi_i(h_i, \varepsilon_j)$ has the same monetary units as $\theta_i$ in Definition \ref{defn:tokens}. Intuitively, a traveler $i$ pays $\pi_i$ for using mobility service $j$ of type $h_i$. The mobility payment $\pi_i$ of any traveler $i$ varies extensively for each type of service $h_i$ and increases fast as $\varepsilon_j$ tends to $\bar{\varepsilon}_j$. To ensure our exposition is compact, we omit the arguments of $\pi_i(h_i, \varepsilon_j)$, and simply write $\pi_i$.

In our modeling framework,  each traveler $i$ pays for using a mobility service $j$ of type $h_i$ on route $\rho_i$ with origin-destination pair $(o, d)$ making an optional stop at transport hub $v_i$. At each transport hub, available funds can be offered to incentivize travelers to use public transit services. Our mobility game can be thought of as a static game that is played repeatedly \cite{Webster2014}, thus travelers are assumed to take different actions multiple times. Therefore, the pricing mechanism needs to consider both the payments of all travelers for each type of service and the available funds at each transport hub. We formalize this idea for the allocation of mobility payments for each traveler $i$ by stating the following definition.

\begin{definition}\label{defn:set_transport_hub}
    Suppose traveler $i \in \mathcal{I}$ chooses route $\rho_i \in \mathcal{P} ^ {(o, d)}$ and makes a stop at transport hub $v_i$ along that route using some service $j \in \mathcal{J}$ of type $h_i \in \mathcal{H}$. Then, the set of co-travelers at $v_i \in \mathcal{V}$ is $\mathcal{S}_{v_i} = \{k \in \mathcal{I} \; | \; v_k = v_i\}$.
\end{definition}

In words, $\mathcal{S}_{v_i}$ groups all travelers including traveler $i$ who have made a stop at transport hub $v_i$. Next, we formally define the available budget at transport hub $v_i$.

\begin{definition}\label{defn:hub_tokens}
    Let $b(v_i) \in \mathbb{R}$ be the amount of funds available for transactions at traveler $i$'s transport hub $v_i \in \mathcal{V} \setminus \{(o, d)\}$ over all types of services $h \in \mathcal{H}$.
\end{definition}

Intuitively, $b(v_i)$ represents the available funds (e.g., after covering all expenses), in the same monetary units as $\theta_i$ and $\pi_i$, that transport hub $v_i$ may allocate to travelers. Practically, even though our proposed mobility game is not dynamic, $b(v_i)$ can be computed based on historical data (e.g., along similar lines presented in \cite{Bandi2014}), and thus capture the ``demand" of services at a particular $v_i$. Each traveler $i \in \mathcal{I}$ starts with a mobility wallet $\theta_i$ and pays $\pi_i$ while they can make a stop at transport hub $v_i$. The latter allows us to model transfers.

We can capture the travelers' preferences of different outcomes using a ``utility function." Travelers are expected to act as utility maximizers. Thus, we can influence the travelers' behavior by introducing a \emph{control input} in the utility function. In our modeling framework, we consider a \emph{mobility pricing mechanism}, as a control input, that aims to reward or penalize each traveler $i$ (either by increasing a traveler's utility or decreasing it). We offer here an informal description of our pricing mechanism. The total excess amount of mobility funds is $b(v_i) - \sum_{i \in \mathcal{S}_{v_i}} \pi_i$. The total excess amount of mobility funds at transport hub $v_i$ excluding traveler $i$ is $b(v_i) - \sum_{k \in \mathcal{S}_{v_i} \setminus \{i\}} \pi_k$. Given the available mobility funds already present at $v_i$, we can redistribute the ``mobility wealth" based on the types of services and roads utilized by the travelers as follows: we consider a quadratic-based pricing mechanism $\tau(v_i, \pi_i)$, defined formally next, which is the same for all travelers.
Under this pricing mechanism, we observe the following two interesting properties: for high values of $\pi_i$, $\tau$ is strictly decreasing; for low values of $\pi_i$, $\tau$ is strictly increasing. Thus, if traveler $i$ pays a high payment $\pi_i$ (e.g., which implies traveler $i$ uses a car), then disincentive is also high to use this mobility service. Thus, this serves as an indirect incentive for a traveler to use public transit or a different transport hub (this becomes clear in \eqref{eqn:utility}). Furthermore, the pricing mechanism $\tau$ can take negative values, and actually strictly decreases fast as $\pi_i$ takes high values for any traveler $i$. So, travelers can get penalized if they choose a ``high-demand" type of service (thus, leading to a high valued $\sum_{k \in \mathcal{S}_{v_i} \setminus \{i\}} \pi_k$). Even if a traveler has the means to pay (i.e., the traveler has a large mobility wallet), the pricing mechanism can penalize the traveler with hefty fees, thus all travelers have the incentive to minimize the penalties and choose public transit services or less congested transport hubs. For example, when a traveler uses a bike, their mobility payment will be low and so they can earn (instead of paying for the service). This incentivizes a sustainable allocation of services to all travelers.
We offer the formal definition of the pricing mechanism next for the allocation of the mobility funds and payments.

\begin{definition}\label{defn:pricing_mechanism}
    The pricing mechanism is a multivariable function $\tau \mapsto \mathbb{R}$ that depends on a traveler $i$'s transport hub $v_i$ and mobility payment $\pi_i$, and is explicitly given by
        \begin{multline}\label{eqn:pricing}
            \tau(v_i, \pi_i) = \\
            \left( b(v_i) - \sum_{k \in \mathcal{S}_{v_i} \setminus \{i\}} \pi_k \right) ^ 2 - \left( b(v_i) - \sum_{i \in \mathcal{S}_{v_i}} \pi_i \right) ^ 2.
        \end{multline}
\end{definition}

Recall that the term $b(v_i)$ captures the demand of a transport hub $v_i$ based on what types of services in general travelers have been using (e.g., if a transport hub has a lot of money, it means travelers use cars significantly).

\begin{remark}
    If we expand \eqref{eqn:pricing} and simplify, we obtain the following relation
        \begin{equation}
            \tau(v_i, \pi_i) = 2 \pi_i \left( b(v_i) - \frac{\pi_i}{2} - \sum_{k \in \mathcal{S}_{v_i} \setminus \{i\}} \pi_k \right).
        \end{equation}
    The behavior of \eqref{eqn:pricing} is rather interesting. Obviously, as traveler $i$'s payment increases, then \eqref{eqn:pricing} decreases. However, given that $b(v_i) > \sum_{k \in \mathcal{S}_{v_i} \setminus \{i\}} \pi_k$, for small values of $\pi_i$, $\tau$ increases up to a maximum point, and then starts to decrease. This characteristic of \eqref{eqn:pricing} serves as a strong incentive for travelers to choose services that are ``cheap" (bikes) or uncongested (buses) since then $\pi_i$ will be small. Otherwise, $\tau$ can take very high negative values as $\pi_i$ increases.
\end{remark}

As long as $b(v_i)$ is higher than $\sum_{k \in \mathcal{S}_{v_i} \setminus \{i\}} \pi_k$, then the pricing mechanism \eqref{eqn:pricing} redistributes wealth back to each traveler $i$ based on what is available on the self-chosen transport hub $v_i$ and how much travelers pay by taking into consideration traveler $i$'s contribution at transport hub $v_i$.

Since the travelers' objective is to maximize their payoff, we need a way to ``incentivize" travelers to avoid decisions that may lead to an empty mobility wallet. Thus, we introduce an empty wallet ``disincentive" for an arbitrary traveler $i$.

\begin{definition}\label{defn:risk}
    Given the current amount of mobility wallet $\theta_i$ of any traveler $i$, the \emph{disincentive} of having an empty wallet is a decreasing function $g : [0, \bar{\theta}_i] \to \mathbb{R}$ given by
        \begin{equation}\label{eqn:token_risk}
            g(\theta_i) = \frac{\bar{\theta}_i}{\theta_i + \eta_i \pi_i},
        \end{equation}
    where $\eta_i \in (0, 1)$ is a socioeconomic characteristic of traveler $i$ and affects the impact of how much they choose to spend or save in terms of their mobility wallet.
\end{definition}

Definition \ref{defn:risk} establishes mathematically a disincentive as a function where $\bar{\theta}_i$ is proportional to the sum of the current mobility wallet $\theta_i$ and the weighted mobility payment $\eta_i \pi_i$. We expect each traveler to avoid as much as possible an empty wallet; hence, \eqref{eqn:token_risk} ensures to ``penalize" travelers with a low mobility wallet $\theta_i$ while choosing to spend $\eta_i \pi_i \approx \theta_i$. Thus, \eqref{eqn:token_risk} grows fast as $\theta_i$ decreases.
We offer now the formal definition of a traveler's action set.

\begin{definition}\label{defn:action_set}
    For an arbitrary traveler $i \in \mathcal{I}$, the action set is $\mathcal{A}_i = \mathcal{P} ^ {(o, d)} \times \mathcal{V} \times \mathbb{R}_{\geq 0}$, where $\mathcal{P} ^ {(o, d)}$ is the set of routes that connects traveler $i$'s origin-destination pair $(o, d)$, $\mathcal{V}$ is the set of nodes in network $\mathcal{G}$ that includes all possible transports hubs $v_i$, and $\pi_i \in \mathbb{R}_{\geq 0}$ is the mobility payment of traveler $i$.
\end{definition}

Note that, by Definition \ref{defn:action_set}, the action set $\mathcal{A}_i$ of an arbitrary traveler $i$ is a coupled set with discrete values (route, transport hub, source/destination pair), and continuous values (mobility payment). Thus, the action profile $a_i \in \mathcal{A}_i$ of traveler $i\in\mathcal{I}$ is a vector of discrete and continuous values. We write $\mathcal{A} = \mathcal{A}_1 \times \mathcal{A}_2 \times \dots \times \mathcal{A}_I$ for the Cartesian product of all the travelers' action sets. We write $a_{- i} = (a_1, a_2, \dots, a_{i - 1}, a_{i + 1}, \dots, a_I)$ for the action profile that excludes traveler $i \in \mathcal{I}$. Next, for the aggregate action profile, we write $a = (a_i, a_{- i})$, $a \in \mathcal{A}$. We also denote by $a ^ *$, $a ^ {\text{Nash}}$ an action profile at a social optimum and at a NE, respectively.

Next, we introduce a travel time latency function to capture the congestion cost that travelers may experience.

\begin{definition}\label{defn:linear_time_latency}
    Let the total number of services of all types $h = 1, \dots, H$ on road $e \in \mathcal{E}$ with at least one traveler be $J_e = \sum_{h \in \mathcal{H}} \omega_h |\mathcal{J}_{e, h}|$, where $\mathcal{J}_{h, e}$ is the set of all services on road $e$ of type $h$, and $(\omega_h)_{h \in \mathcal{H}}$, $\omega_h \in [0, 1]$ are weight parameters that depend on the type $h$ to capture the different impact of services on the traffic. Then, the travel time latency function is a strictly increasing linear function $c_e : \mathbb{N} \to \mathbb{R}$, with explicit form $c_e(J_e) = \xi_1 J_e + \xi_2$, where $\xi_1, \xi_2$ are constants.
\end{definition}

Notice that we assume linearity in the travel time latency functions $c_e$, which is not unique in the literature \cite{Roughgarden2002,Roughgarden2007,Etesami2020}. The justification behind linearity is that it is the simplest yet most useful way for a mathematical analysis to capture the travel costs in terms of distance or road capacity and the traffic congestion costs. The choice of the constants $\xi_1$ and $\xi_2$ play an important role, namely $\xi_2$ can represent the length of road $e$ and $\xi_1$ normalizes the number of services on road $e$ so that both components of $c_e$ have the same units.

We can now formally define the utility of any traveler $i\in\mathcal{I}$.

\begin{definition}
    The utility $u_i : \mathcal{A} \to \mathbb{R}$ of traveler $i\in\mathcal{I}$ is what traveler $i$ receives under the risk-neutral setting given by
        \begin{multline}\label{eqn:utility}
            u_i(a) = \tau(v_i, \pi_i) - g(\theta_i) \\
            - \zeta_1 \left( \sum_{e \in \rho_i : \rho_i \in \mathcal{P} ^ {(o, d)}} c_e(J_e) \right) - \zeta_2 \left( \frac{|\mathcal{S}_{v_i}|}{\sigma(v_i, h_i)} \right),
        \end{multline}
    where $\sigma(v_i, h_i)$ is the rate of travel service at transport hub $v_i$ for type of service $h_i$ (how many travelers per hour can travel using type of service $h_i$ from transport hub $v_i$), and $\zeta_1, \zeta_2 \in \mathbb{R}$ are unit parameters that transform time to money (that way the units of \eqref{eqn:utility} are consistent).
\end{definition}

Note that both constants $\zeta_1, \zeta_2$ get absorbed by the constants of function $c_e$ (as defined in Definition \ref{defn:linear_time_latency}) and parameter $\sigma$, respectively. So, we can safely omit them from the mathematical analysis. In \eqref{eqn:utility}, the first term represents the pricing mechanism and is the amount of mobility funds redistributed to traveler $i$ for paying $\pi_i$. The second term is the disincentive as defined in Definition \ref{defn:risk}, and the third term is the congestion cost of traveler $i$ due to traffic on road $e$. Finally, the last term in \eqref{eqn:utility} is a waiting cost for joining transport hub $v_i$, where the number of travelers at transport hub $v_i$ is proportional to the rate of travel service at transport hub $v_i$.

Next, we characterize the mobility game in our modeling framework.

\begin{definition}\label{defn:mobility_game}
    The mobility game is fully characterized by the tuple $\mathcal{M} = \langle \mathcal{I}, \mathcal{J}, (\mathcal{A}_i)_{i \in \mathcal{I}}, (u_i)_{i \in \mathcal{I}} \rangle$, a collection of sets of travelers, mobility services, actions, and a profile of utilities.
\end{definition}

The mobility game $\mathcal{M}$ is a non-cooperative repeated routing game with a multimodal transportation network and coupled action sets. The travelers have a travel-mode choice to make that will satisfy their travel needs. The benefit of ensuring that our mobility game $\mathcal{M}$ is a repeated game is that it eliminates the possibility of unassigned travelers. At this point also, we clarify the \emph{information structure} of the mobility game $\mathcal{M}$ (``who knows what?" \cite{Malikopoulos2021}). All travelers have complete knowledge of the mobility system (network, travel time latencies and waiting costs, and utility functions). Each traveler knows their own information (action and utility) as well as the information of other travelers. For the purposes of our work, we observe that a NE is most fitting to apply as a solution concept as it requires complete information.

\section{ANALYSIS AND PROPERTIES}
\label{Section:Analysis&Properties}

In this section, our goal is to establish the existence of at least one NE in the mobility game $\mathcal{M}$, derive an upper bound for the PoA, and perform a prospect theory analysis. 

We start our exposition by providing a summary of two necessary preliminary concepts and results of game theory that we use throughout the paper.

\begin{definition}\label{defn:potential}
    A game $\mathcal{M}$ is an \emph{exact potential game} if there exists a potential function $\Phi : \mathcal{A} \to \mathbb{R}$ such that for all $i \in \mathcal{I}$, for all $a_{- i}$, and for all $a_i, a_i '\in\mathcal{A}_i$, we have
        \begin{equation}
            \Phi(a_i, a_{- i}) - \Phi(a_i ', a_{- i}) = u_i(a_i, a_{- i}) - u_i(a_i ', a_{- i}).
        \end{equation}
\end{definition}


\begin{definition}\label{defn:NE}
    An action profile $a ^ {\text{Nash}} = (a_i ^ {\text{Nash}}, a_{- i} ^ {\text{Nash}})$ is called a pure-strategy \emph{Nash equilibrium} for game $\mathcal{M}$ if, for all $i \in \mathcal{I}$, we have $u_i(a_i ^ {\text{Nash}}, a_{- i} ^ {\text{Nash}}) \geq u_i(a_i ', a_{- i} ^ {\text{Nash}})$, for all $a_i ' \in \mathcal{A}_i$.
\end{definition}

The potential function $\Phi$ is a useful tool in showing whether a game has a NE and analyzing its properties. This is because by construction the effect on utility of any traveler's action is expressed by one function common for all travelers.

\subsection{Existence of a Nash Equilibrium}
\label{Subsection:NashExistence&PoA}

In this subsection, we prove that for the mobility game $\mathcal{M}$, as defined in Definition \ref{defn:mobility_game}, there exists at least one NE. The key idea of our proof is the use of a potential function, as defined in Definition \ref{defn:potential}, that captures the changes in utility of an arbitrary traveler that deviates in their action.

\begin{theorem}\label{thm:existence_NASH}
    The mobility game $\mathcal{M}$ admits a pure-strategy NE.
\end{theorem}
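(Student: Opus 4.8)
The plan is to establish that $\mathcal{M}$ is an exact potential game in the sense of Definition~\ref{defn:potential} and then conclude via the standard fact that a maximizer of the potential is a pure-strategy NE (Definition~\ref{defn:NE}). I would build the potential $\Phi:\mathcal{A}\to\mathbb{R}$ as a sum of four pieces, one per term of the utility~\eqref{eqn:utility}. The disincentive $-g(\theta_i)$ of Definition~\ref{defn:risk} depends on traveler $i$'s action only through the payment $\pi_i$, so it enters $\Phi$ as the separable term $-\sum_{i\in\mathcal{I}}g(\theta_i)$, which trivially satisfies the exact-potential identity. The road-congestion term $-\zeta_1\sum_{e\in\rho_i}c_e(J_e)$ and the hub-waiting term $-\zeta_2|\mathcal{S}_{v_i}|/\sigma(v_i,h_i)$ are both congestion-type costs, each resource (a road $e$, respectively a hub $v$) being loaded by the travelers that select it; because $c_e$ is linear (Definition~\ref{defn:linear_time_latency}) and the hub cost is affine in $|\mathcal{S}_v|$, each admits a Rosenthal-type potential, i.e.\ a term of the form (constant)$\times\sum_{\text{resource}}\sum_{k=1}^{\text{load}}(\text{per-unit cost at level }k)$, designed so that a change of route or hub by traveler $i$ changes this term by exactly the induced change in $i$'s cost. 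I would verify this identity by the usual telescoping argument; linearity is what keeps the bookkeeping clean.

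The conceptually new piece is the pricing mechanism $\tau$ of Definition~\ref{defn:pricing_mechanism}. The key observation is that $\tau(v_i,\pi_i)$ is the \emph{marginal contribution} of traveler $i$ to the global quantity
\[
\Phi_\tau(a)\;=\;-\sum_{v\in\mathcal{V}}\Big(b(v)-\sum_{k\in\mathcal{S}_v}\pi_k\Big)^2 .
\]
Indeed, when traveler $i$ deviates from $(v_i,\pi_i)$ to $(v_i',\pi_i')$ with $a_{-i}$ fixed, the only squared terms of $\Phi_\tau$ that move are the ones attached to $v_i$ and to $v_i'$, and a direct expansion of $\big(b(v)-\sum_{k\in\mathcal{S}_v\setminus\{i\}}\pi_k-\pi_i\big)^2$ shows this change equals $\tau(v_i',\pi_i')-\tau(v_i,\pi_i)$ --- the equivalent form of $\tau$ recorded after Definition~\ref{defn:pricing_mechanism} makes the algebra transparent. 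The point is that the ``reference'' quantity $b(v)-\sum_{k\in\mathcal{S}_v\setminus\{i\}}\pi_k$ entering $\tau$ for traveler $i$ does not involve $\pi_i$, so $\Phi_\tau$ is an exact potential for the $\tau$-part. Setting $\Phi=\Phi_\tau-\sum_{i\in\mathcal{I}}g(\theta_i)+\Phi_c+\Phi_w$ and checking the identity of Definition~\ref{defn:potential} term by term for an arbitrary deviation --- keeping in mind that a deviation may simultaneously change route, hub, and payment, so each piece must be tested against a joint change rather than a one-coordinate perturbation --- shows that $\mathcal{M}$ is an exact potential game.

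It remains to exhibit a maximizer of $\Phi$ on $\mathcal{A}$. The route and hub coordinates range over the finite sets $\mathcal{P}^{(o,d)}$ and $\mathcal{V}$; the payment coordinate a priori lies in $\mathbb{R}_{\ge0}$, but admissible payments satisfy $0\le\pi_i\le\bar\theta_i$, so it suffices to work on the compact box $\prod_{i\in\mathcal{I}}[0,\bar\theta_i]$ (and in any case $\Phi\to-\infty$ as any $\pi_i\to\infty$ because of the squared terms in $\Phi_\tau$, so nothing is lost by truncating). On this compact domain $\Phi$ is continuous --- each $g(\theta_i)=\bar\theta_i/(\theta_i+\eta_i\pi_i)$ is continuous there provided $\theta_i>0$, and the boundary case $\theta_i=0$ is harmless since $\Phi$ is then merely upper semicontinuous and still attains its (finite) maximum. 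Hence $\Phi$ attains a maximum at some $a^{\mathrm{Nash}}\in\mathcal{A}$, and by the exact-potential identity, for every $i$ and every $a_i'\in\mathcal{A}_i$ we get $u_i(a^{\mathrm{Nash}})-u_i(a_i',a_{-i}^{\mathrm{Nash}})=\Phi(a^{\mathrm{Nash}})-\Phi(a_i',a_{-i}^{\mathrm{Nash}})\ge0$, which is exactly the NE condition.

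I expect the main obstacle to be the second paragraph: spotting that $\tau$ is a marginal-contribution term and getting the accounting right when the deviating traveler moves the discrete coordinates (route, hub) and the continuous coordinate (payment) at once. The congestion and waiting pieces are routine Rosenthal arguments once the load variables $J_e$ and $|\mathcal{S}_v|$ are pinned down, and the compactness step is standard apart from the minor technicalities of the unbounded payment coordinate and the singularity of $g$ at an empty wallet.
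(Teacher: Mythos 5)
Your proposal takes essentially the same route as the paper: it constructs the same four-part exact potential (the quadratic marginal-contribution term for $\tau$, the separable wallet-disincentive term, and Rosenthal-type terms for road congestion and hub waiting), verifies Definition~\ref{defn:potential} by term-by-term differencing under a joint deviation of route, hub, and payment, and concludes existence from a maximizer of the potential over the compact action space. Your sign convention $\Phi_\tau=-\sum_{v\in\mathcal{V}}\bigl(b(v)-\sum_{k\in\mathcal{S}_v}\pi_k\bigr)^2$ is the internally consistent one, and your explicit compactness/continuity argument for attaining the maximizer is a slightly more careful finish than the paper's appeal to standard potential-game results.
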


\begin{proof}
    See Appendix \ref{appendix_1}.
\end{proof}

Note that this existence result is not straightforward as the action set of any traveler is a coupled set composed of countable and uncountable subsets.

\begin{corollary}\label{cor:convergence}
    If the mobility game $\mathcal{M}$ is played repeatedly, then the travelers' actions converge to a pure-strategy NE in finite time.
\end{corollary}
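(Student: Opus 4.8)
The plan is to derive the corollary from Theorem~\ref{thm:existence_NASH} by exploiting the finite improvement property (FIP) enjoyed by every exact potential game whose reachable action profiles form a finite set, and to identify ``repeated play'' with an improvement path. From the proof of Theorem~\ref{thm:existence_NASH} we already have an exact potential $\Phi:\mathcal{A}\to\mathbb{R}$ satisfying $\Phi(a_i,a_{-i})-\Phi(a_i',a_{-i})=u_i(a_i,a_{-i})-u_i(a_i',a_{-i})$ for every $i\in\mathcal{I}$, every $a_{-i}$, and every $a_i,a_i'\in\mathcal{A}_i$. When $\mathcal{M}$ is played repeatedly under myopic (better-response) updating, in each round some traveler $i$ switches from $a_i$ to an $a_i'$ with $u_i(a_i',a_{-i})>u_i(a_i,a_{-i})$; by the displayed identity this strictly increases $\Phi$ by exactly the same positive amount, so along any such trajectory $\Phi$ is strictly increasing.

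Next I would argue that only finitely many action profiles can actually arise during play, so that $\Phi$ assumes finitely many values along the trajectory. The discrete coordinates of an action — the route in $\mathcal{P}^{(o,d)}$, the transport hub in $\mathcal{V}$, and the induced service type in $\mathcal{H}$ — range over finite sets, and the mobility payment $\pi_i$ is, by the definition of the mobility payment function $\pi_i(h_i,\varepsilon_j)$, a function of the service type $h_i$ and the current physical capacity $\varepsilon_j\in\{0,1,\dots,\bar{\varepsilon}_j\}$, hence is itself pinned down by finitely-valued data once the full discrete configuration is fixed. (Moreover, for any fixed discrete configuration, $u_i$ is strictly concave in $\pi_i$, since $\tau(v_i,\cdot)$ is a downward parabola and $-g(\theta_i)$ is concave in $\pi_i$, while the congestion and waiting terms do not depend on $\pi_i$; so the optimal payment is in any case uniquely determined.) Consequently the profiles reachable under repeated play lie in a finite subset of $\mathcal{A}$, and a strictly increasing sequence of $\Phi$-values on a finite set must be finite: the trajectory terminates after finitely many rounds.

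Finally, a state at which the improvement path cannot be continued is, by construction, a state from which no traveler has a strictly improving unilateral deviation, which by Definition~\ref{defn:NE} is exactly a pure-strategy NE; hence repeated myopic play reaches such an equilibrium in finite time, proving the claim. The only delicate point — and the main obstacle — is the coupled discrete/continuous structure of $\mathcal{A}_i$: one has to make sure the continuous payment coordinate cannot generate an infinite improving sequence with vanishing increments. This is resolved by the observation above that $\pi_i$ is determined by the discrete data (equivalently, by strict concavity of $u_i$ in $\pi_i$ the relevant payments are uniquely attained), which collapses the game, for the purposes of improvement dynamics, to a finite exact potential game to which the FIP applies verbatim.
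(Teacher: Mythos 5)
Your proposal is correct and takes essentially the same route as the paper: the paper's proof simply invokes Theorem~2.6 of the cited potential-games text (the standard finite-improvement-property/convergence result) on top of Theorem~\ref{thm:existence_NASH}, together with an informal description of travelers sequentially improving under complete information. Your reconstruction of that FIP argument --- including the explicit justification that the continuous payment coordinate does not generate infinitely many reachable profiles, a point the paper leaves implicit --- is the same idea, just spelled out in more detail.
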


\begin{proof}
    This is a consequence of Theorem \ref{thm:existence_NASH} and follows from Theorem 2.6 (pp. 33) in \cite{La2016}. It is sufficient to note that the mobility game $\mathcal{M}$ is a repeated routing game with complete information. So, at first, traveler $i \in \mathcal{I}$ chooses their action $a_i$ according to the existing information of the mobility system (travel time latencies, network congestion). Then any other traveler $k \in \mathcal{I}$ chooses their action $a_k$ based on what traveler $i$ has chosen to do and the information state of the mobility system is updated (who uses what route and what transport hub is busy and how many funds are collected at each transport hub). In turn, traveler $i$ can improve their action $a_i$ to $a_i '$, which gives a chance to any other traveler $k$ to improve too. Thus, since all travelers compete against each other for the best possible utility, the mobility game $\mathcal{M}$ is guaranteed to reach at least one NE.
\end{proof}

\subsection{Price of Anarchy Analysis}

An existence result (Theorem \ref{thm:existence_NASH}) leads to the problem of multiple NE and raises questions to the efficiency of each equilibrium. For example, an important concern is the efficiency of the equilibrium that the travelers will reach (as it is guaranteed by Corollary \ref{cor:convergence}). To address this concern, we provide an analysis based on the \emph{Price of Anarchy} (PoA) \cite{Koutsoupias1999}, which is one of the most widely-used metrics to measure the inefficiency in a system and provides an understanding of how the travelers' decision-making affect the overall performance of the system. We provide the formal definition of the PoA.
    
\begin{definition}
    Let the \emph{social welfare} of the mobility game $\mathcal{M}$ be represented by $J(a) = \sum_{i \in \mathcal{I}} u_i(a)$. Then, the PoA is the ratio of the maximum optimal social welfare over the minimum social welfare at a NE, i.e.,
        \begin{equation}\label{defn:poa}
            \text{PoA} = \frac{\max_{a \in \mathcal{A}} \sum_{i \in \mathcal{I}} u_i(a)}{\min_{a \in \mathcal{A} ^ {\text{Nash}}} \sum_{i \in \mathcal{I}} u_i(a)} \geq 1,
        \end{equation}
    where $\mathcal{A} ^ {\text{Nash}}$ is the set of NE, which is guaranteed to be non-empty according to Theorem \ref{thm:existence_NASH}.
\end{definition}

Next, we show that, for the mobility game $\mathcal{M}$, \eqref{defn:poa} is as low as possible at an arbitrary NE. Thus, it follows that our PoA result yields an upper bound for the inefficiencies at a NE of the mobility game $\mathcal{M}$.

\begin{theorem}\label{thm:PoA}
    Any inefficiencies of any NE of the mobility game $\mathcal{M}$ remain low as close to a constant as the number of travelers $|\mathcal{I}| = I$ tends to infinity. Mathematically, we have
        \begin{equation}\label{eqn:thm_PoA_relation}
            \text{PoA} \leq 2 + \frac{5}{I} \sum_{v \in \mathcal{V}} \left( \sum_{h \in \mathcal{H}} b(v, h) \right) ^ 2.
        \end{equation}
\end{theorem}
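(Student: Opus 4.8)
The plan is to exploit the fact, established in Theorem~\ref{thm:existence_NASH}, that $\mathcal{M}$ is an exact potential game, so that any Nash equilibrium satisfies the unilateral-deviation inequality $u_i(a^{\text{Nash}}) \ge u_i(a_i', a_{-i}^{\text{Nash}})$ for every $i \in \mathcal{I}$ and every $a_i' \in \mathcal{A}_i$. Fix a worst-case NE $\hat a$ (the minimizer in the denominator of \eqref{defn:poa}) and a social optimum $a^*$. Summing the deviation inequalities with the choice $a_i' = a_i^*$ gives $\sum_{i} u_i(\hat a) \ge \sum_i u_i(a_i^*, \hat a_{-i})$, so it suffices to (a) upper bound the numerator $\sum_i u_i(a^*)$, and (b) lower bound $\sum_i u_i(a_i^*, \hat a_{-i})$ in terms of $\sum_i u_i(a^*)$ and $\sum_i u_i(\hat a)$, i.e.\ prove a smoothness-type estimate tailored to \eqref{eqn:utility}. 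Throughout I would split the utility as $u_i(a) = \tau(v_i, \pi_i) - L_i(a)$, where $L_i(a) = g(\theta_i) + \zeta_1 \sum_{e \in \rho_i} c_e(J_e) + \zeta_2 |\mathcal{S}_{v_i}|/\sigma(v_i, h_i) \ge 0$ collects the three non-negative cost terms.

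For the numerator, dropping the non-negative costs gives $\sum_i u_i(a^*) \le \sum_i \tau(v_i^*, \pi_i^*)$. Using the reformulation $\tau(v_i,\pi_i) = 2\pi_i\bigl(b(v_i) - \tfrac{\pi_i}{2} - \sum_{k \in \mathcal{S}_{v_i}\setminus\{i\}}\pi_k\bigr)$ from the Remark and grouping travelers by the hub they select, a short computation yields, for each $v$, $\sum_{i \in \mathcal{S}_v}\tau(v,\pi_i) = 2(b(v)-P_v)P_v + \sum_{i\in\mathcal{S}_v}\pi_i^2$ with $P_v = \sum_{i\in\mathcal{S}_v}\pi_i$; in the operating regime $0 \le P_v \le b(v)$ this is at most a fixed multiple of $b(v)^2 = \bigl(\sum_{h\in\mathcal{H}} b(v,h)\bigr)^2$. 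Summing over $v \in \mathcal{V}$ bounds the numerator by a constant times $\sum_{v}\bigl(\sum_h b(v,h)\bigr)^2$, which is the mechanism by which the correction term in \eqref{eqn:thm_PoA_relation} decays like $1/I$ once it is divided through by a social welfare that scales with the number of travelers.

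For the denominator I would bound $\sum_i u_i(a_i^*, \hat a_{-i})$ from below term by term. For the congestion part, linearity of $c_e$ and the fact that rerouting traveler $i$ onto $\rho_i^*$ raises the service count on each $e \in \rho_i^*$ by a bounded amount give the standard linear-latency estimate, which after summing over $i$ contributes the factor $2$ in \eqref{eqn:thm_PoA_relation}; the waiting-cost term $\zeta_2|\mathcal{S}_{v_i}|/\sigma$ is handled identically since it is itself an affine congestion term in $|\mathcal{S}_{v_i}|$. The pricing term $\tau(v_i^*, \pi_i^*)$ evaluated against $\hat a_{-i}$ differs from its value at $a^*$ only through $\sum_{k\in\mathcal{S}_{v_i^*}\setminus\{i\}}\pi_k$, and this discrepancy --- together with the analogous discrepancy hidden in $g$ --- is again controlled uniformly by $b(v)^2$, which is where the constant $5$ (rather than $2$) would enter. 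Writing $N = \sum_i u_i(a^*)$ and $W = \sum_i u_i(\hat a)$, combining (a) and (b) yields an inequality of the form $N \le 2W + 5\sum_v\bigl(\sum_h b(v,h)\bigr)^2$ after also using a positive lower bound on $W$ coming from a safe deviation (e.g.\ $\pi_i = 0$ on a least-congested route and an unused hub, which exists since $I < J$); dividing by $W$ (and by $I$ where appropriate) gives \eqref{eqn:thm_PoA_relation}.

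The step I expect to be the main obstacle is handling the pricing mechanism $\tau$: it is neither monotone in $\pi_i$ nor separable across travelers, so the classical smoothness framework for congestion games does not apply off the shelf, and one must instead lean on the algebraic identity of the Remark together with the sign condition $b(v) > \sum_{k\in\mathcal{S}_v\setminus\{i\}}\pi_k$ to keep every estimate quadratic in $b(v)$. A related technical nuisance is the hybrid action space $\mathcal{A}_i = \mathcal{P}^{(o,d)} \times \mathcal{V} \times \mathbb{R}_{\ge 0}$: the deviation argument must accommodate a continuum of payment deviations alongside finitely many route/hub choices, and one must separately verify that the worst-case NE social welfare $W$ is strictly positive so that \eqref{defn:poa} is well defined --- which is precisely the role of the safe-deviation argument above.
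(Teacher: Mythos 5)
Your skeleton is the same as the paper's: invoke the Nash deviation inequality with $a_i'=a_i^*$, sum over travelers to compare $J(a^{\text{Nash}})$ with $\sum_i u_i(a_i^*,a_{-i}^{\text{Nash}})$, split the utility \eqref{eqn:utility} into the pricing/disincentive block, the linear-latency block, and the waiting block, bound each block (Cauchy--Schwarz for the affine latencies, an algebraic bound built on the identity $\tau(v,\pi_i)=2\pi_i(b(v)-\pi_i/2-\sum_{k\in\mathcal{S}_v\setminus\{i\}}\pi_k)$ for the pricing), and finally divide by the NE welfare. However, at the decisive quantitative step you assert rather than prove. The claim that the discrepancy in $\tau(v_i^*,\pi_i^*)$ evaluated against $\hat a_{-i}$, ``together with the analogous discrepancy hidden in $g$, is again controlled uniformly by $b(v)^2$, which is where the constant $5$ would enter'' is exactly the content of the paper's Lemma \ref{lemma1}, and that lemma does not produce a clean $O(b(v)^2)$ additive term: it produces a bound involving $\sqrt{\tilde b^2+2(J_3(a^{\text{Nash}})-I\bar\theta_i)}$ and $\sqrt{\tilde b^2+2(J_3(a^*)-I\bar\theta_i)}$, obtained by deriving quadratic inequalities in $\sqrt{\sum_v(\sum_{i\in\mathcal{S}_v}\pi_i)^2}$ at the optimum and at the NE, taking discriminants, and applying Cauchy--Schwarz to the cross term $\sum_v(\sum_k\pi_k^*)(\sum_k\pi_k)$. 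Your shortcut for the numerator also quietly assumes the operating regime $0\le\sum_{i\in\mathcal{S}_v}\pi_i\le b(v)$, which is not a hypothesis of the theorem.

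Because the pricing bound couples NE and optimum quantities under square roots, the combined inequality is \emph{implicit}: the unknown ratio $J(a^*)/J(a^{\text{Nash}})$ appears on both sides and inside radicals, and extracting the explicit constants $2$ and $5$ requires the separate resolution step of Lemma \ref{lemma2} (squaring, solving the quadratic in the ratio, and bounding the remaining radicals). Your proposed clean form $N\le 2W+5\sum_v\bigl(\sum_h b(v,h)\bigr)^2$ skips this entirely, and the factor $2$ does not come directly from the linear-latency estimate as you suggest. Finally, the $1/I$ scaling in \eqref{eqn:thm_PoA_relation} is not delivered by your argument: converting an additive $5\tilde b^2$ into $5\tilde b^2/I$ after dividing by $W=J(a^{\text{Nash}})$ needs $W$ to grow at least linearly in $I$, whereas your safe-deviation argument at best gives $W>0$; since utilities contain negative pricing and cost terms (several equilibrium values of $\tau$ in the paper's numerical example are negative), a per-traveler lower bound of order one is not automatic and must be argued explicitly, as it is (implicitly) in the division step leading into Lemma \ref{lemma2}. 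As written, then, the proposal reproduces the paper's outline but leaves open precisely the two lemmas and the normalization that constitute the actual proof.
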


\begin{proof}
    See Appendix \ref{appendix_4}.
\end{proof}

We discuss now the intuition behind Theorem \ref{thm:PoA}. The travelers of the mobility game $\mathcal{M}$ are considered selfish and non-cooperative, thus one important question is what will be the impact of selfishness in the efficiency of the mobility system. Since the existence of a NE is guaranteed by Theorem \ref{thm:existence_NASH} and the mobility game $\mathcal{M}$ converges to at least one NE, we can compare the level of inefficiency at a NE to the social optimum (this is exactly what the PoA does). The bound we have derived in \eqref{eqn:thm_PoA_relation} under certain conditions ensures the mobility system's inefficiency is guaranteed to remain within a constant and as the number of travelers increases this bound becomes smaller and smaller. So, our modeling framework can ensure in a realistic setting (big city with road infrastructure) with a large number of travelers a sufficiently efficient operation of the mobility system as it could ideally be operated by a central authority ``ordering" the travelers how to travel. Our bound in \eqref{eqn:thm_PoA_relation} is strong in the sense that it excludes any other possibility of an improvement in efficiency compared to what we can achieve at a NE. Furthermore, our mobility game $\mathcal{M}$ is a special case of the model considered in \cite{Awerbuch2013,Roughgarden2011}, in which the authors show that the bound is exact and tight.

\subsection{Prospect Theory Analysis}
\label{Subsection:NashExistencePT}

In this subsection, we provide an introduction to prospect theory and its main concepts \cite{Wakker2010,Barberis2013}. We then incorporate prospect theory to our modeling framework. One of the main questions prospect theory attempts to answer is how a decision-maker may evaluate different possible actions/outcomes under uncertain and risky circumstances. Thus, prospect theory is a descriptive behavioral model and focuses on three main behavioral factors:
    \begin{enumerate}
        \item \emph{Reference dependence}: decision makers make decisions based on their utility, which is measured from the ``gains" or ``losses." However, the utility is a gain or loss relative to a reference point that may be unique to each decision maker. It has been shown in experimental studies \cite{Barberis2013}, the reference dependence captures the tendency of a decision-maker to be affected in their decisions by the \emph{changes in attributes} than the \emph{absolute magnitudes}. For example, shortest/average travel time between two locations.
        \item \emph{Diminishing sensitivity}: changes in value have a greater impact near the reference point than away from the reference point. For example, an individual is highly likely to discriminate between a 1 and 2 hours travel time, but not very likely to notice the difference between 18 and 19 hours travel time.
        \item \emph{Loss aversion}: decision makers are more conservative in gains and more risky in losses. For example, a traveler may prefer to secure a 45 min commute rather than risking for a 1.5 hours commute.
    \end{enumerate}
One way to mathematize the above behavioral factors (1) - (3), is to consider an action by a decision-maker as a ``gamble" with objective utility value $z \in \mathbb{R}$ (e.g., money). We say that this decision maker \emph{perceives} $z$ subjectively using a \emph{value function} \cite{Tversky1992,Al-Nowaihi2008}
    \begin{equation}\label{eqn:value_function}
        \nu(z) =
            \begin{cases}
                (z - z_0) ^ {\beta_1}, & \text{if } z \geq z_0, \\
                - \lambda (z_0 - z) ^ {\beta_2}, & \text{if } z < z_0,
            \end{cases}
    \end{equation}
where $z_0$ represents a reference point, $\beta_1, \beta_2 \in (0, 1)$ are parameters that represent the diminishing sensitivity. Both $\beta_1, \beta_2$ shape \eqref{eqn:value_function} in a way that the changes in value have a greater impact near the reference point than away from the reference point.
We observe that \eqref{eqn:value_function} is concave in the domain of gains and convex in the domain of losses. Moreover, $\lambda \geq 1$ reflects the level of loss aversion of decision makers (see Fig. \ref{fig:prospect1}).

\begin{remark}
    To the best of our knowledge, there does not exists a widely-agreed theory that determines and defines the reference dependence \cite{Kahneman1979,Koszegi2007,Baucells2011}. In engineering \cite{Hota2016,Etesami2020}, it is assumed that $z_0 = 0$ capturing a decision maker's expected status-quo level of the resources.
\end{remark}

As we discussed earlier in this subsection, prospect theory models the subjective behavior of decision makers under uncertainty and risk. Each objective utility $z \in \mathbb{R}$ is associated with a probabilistic occurrence, say $p \in [0, 1]$. Decision makers though are subjective and perceive $p$ in different ways depending on its value. To capture this behavior, we introduce a strictly increasing function $w : [0, 1] \to \mathbb{R}$ with $w(0) = 0$ and $w(1) = 1$ called the \emph{probability weighting function}. This function allows us to model how decision makers may overestimate small probabilities of objective utilities, i.e., $w(p) > p$ if $p$ is close to $0$, or underestimate high probabilities, i.e., $w(p) < p$ if $p$ is close to $1$ (see Fig. \ref{fig:prospect2}). For the purposes of this work, we use the probability weighting function first introduced in \cite{Prelec1998},
    \begin{equation}\label{eqn:prelec_weight}
        w(p) = \exp \left(- (- \log (p)) ^ {\beta_3} \right), \quad p \in [0, 1],
    \end{equation}
where $\beta_3 \in (0, 1)$ represents a \emph{rational index}, i.e., the distortion of a decision-maker's probability perceptions. Mathematically, $\beta_3$ controls the curvature of the weighting function (see Fig. \ref{fig:prospect2}). Although there are many different formulations for the probability weighting function, we use \eqref{eqn:prelec_weight} defined in \cite{Prelec1998} as it is a single-parameter function and it can be computed in polynomial time.

\begin{figure}[ht]
    \centering
    \includegraphics[width = 1.0 \columnwidth]{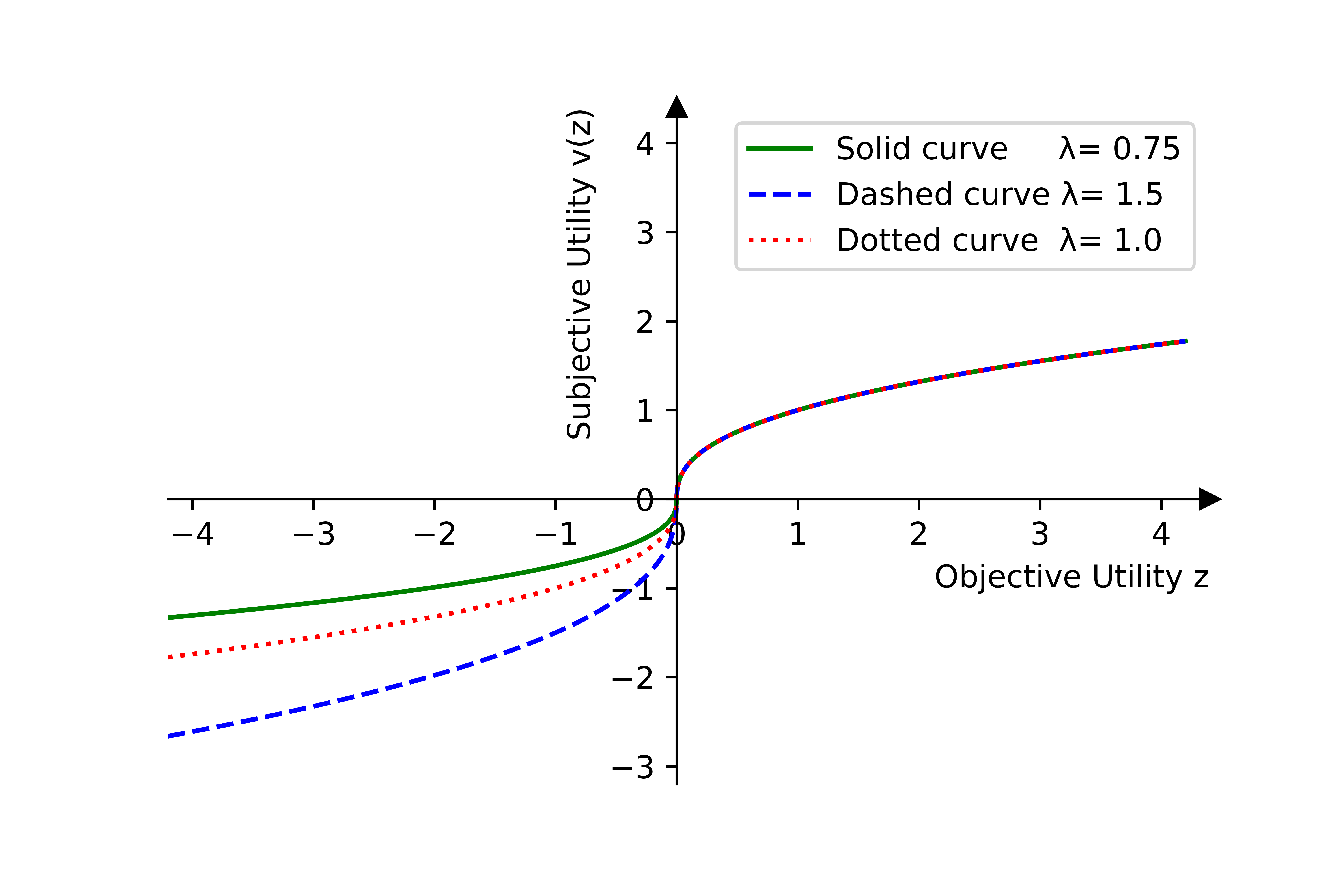}
    \caption{The value function for three different values of $\lambda$.}
    \label{fig:prospect1}
\end{figure}

\begin{figure}[ht]
    \centering
    \includegraphics[width = 1.0 \columnwidth]{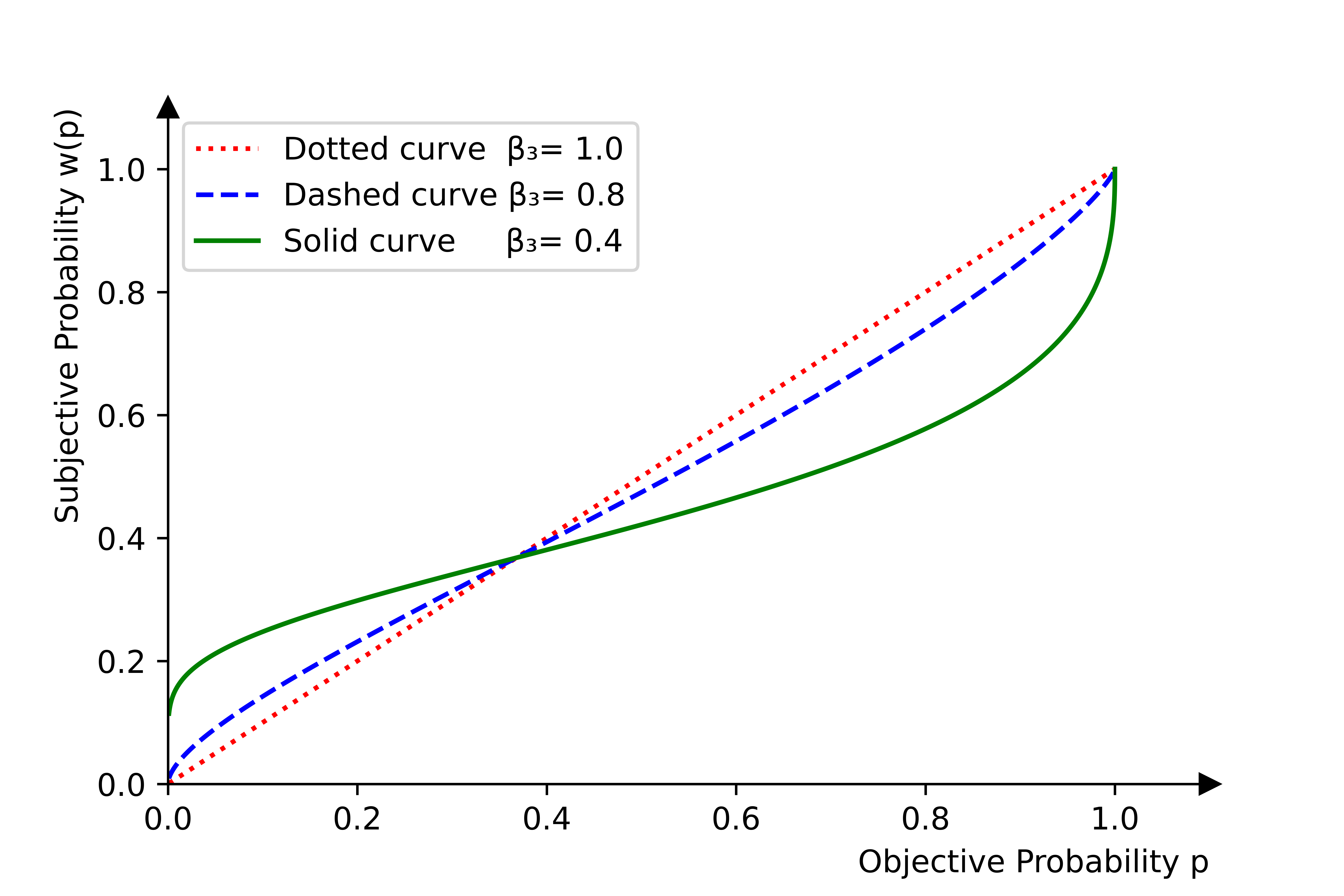}
    \caption{The probability weighting function for three different values of the rational index $\beta_3$.}
    \label{fig:prospect2}
\end{figure}

Next, we define a \emph{prospect} which is a tuple of the objective utility (gain or loss) and its probability of happening.

\begin{definition}
    Suppose that there are $K \in \mathbb{N}$ possible outcomes available to a decision-maker and $z_k \in \mathbb{R}$ is the $k$th gain/loss of objective utility. Then a prospect $\ell_k$ is a tuple of the utilities and their respective probabilities $\ell_k = (z_0, z_1, z_2, \dots, z_K; p_0, p_1, p_2, \dots, p_K)$, where $k = 1, 2, \dots, K$. We denote the $k$th prospect more compactly as $\ell_k = (z_k, p_k)$. We have that $\sum_{k = 0} ^ K p_k = 1$ and $\ell_k$ is well-ordered, i.e., $z_0 \leq x_1 \leq \cdots \leq z_K$. Under prospect theory, the decision-maker evaluates their ``subjective utility" as $u(\ell) = \sum_{0 \leq k \leq K} v(z_k) w(p_k)$, where $\ell = (\ell_k)_{k = 1} ^ K$ is the profile of prospects of $K$ outcomes.
\end{definition}

In the remainder of this subsection, we apply the prospect theory to our modeling framework, clearly define the mobility outcomes (objective and subjective utilities), and then show that the prospect-theoretic mobility game $\mathcal{M}$ admits a NE.

Travelers may be uncertain on the available amount of mobility funds at any transport hub, that is why we we define a \emph{mobility prospect} to represent as a random variable $Z$ with objective utilities $z_1, z_2, \dots, z_K$ and their probabilities $p_1, p_2, \dots, p_K$. Each $z_k$ now represents the uncertain $b(v_i)$. In addition, the reference dependence of each traveler $i$ is represented by $z_i ^ 0 \in \mathbb{R}$. For any traveler $i$, the probability weighting function is $w_i : [0, 1] \to \mathbb{R}$ and the value function is $\nu_i(z_k, z_i ^ 0) : \mathbb{R} ^ 2 \to \mathbb{R}$, $k = 1, 2, \dots, K$. Thus, we have
    \begin{equation}\label{eqn:expected_utility_PT}
        \mathbb{E} [Z] = \sum_{k = 1} ^ K \nu_i(z_k, z_i ^ 0) w_i(p_k),
    \end{equation}
where $w_i(p_k)$ is given by \eqref{eqn:prelec_weight}, and
    \begin{equation}\label{eqn:valuation_prospect}
        \nu_i(z_k, z_i ^ 0) =
            \begin{cases}
                (z_k - z_i ^ 0) ^ {\beta}, & \text{if } z_k \geq z_i ^ 0, \\
                - \lambda (z_i ^ 0 - z_k) ^ {\beta}, & \text{if } z_k < z_i ^ 0,
            \end{cases}
    \end{equation}
where $\beta = \beta_1 = \beta_2$. We can justify $\beta_1 = \beta_2$ in the above definition as it has been verified to produce extremely good results and the outcomes are consistent with the original data \cite{Tversky1992}.
Next, we explicitly define the reference point for the mobility game $\mathcal{M}$ as follows $z_i ^ 0 = \left( \sum_{k \in \mathcal{S}_{v_i} \setminus \{i\}} \pi_k \right) ^ 2 - \left( \sum_{i \in \mathcal{S}_{v_i}} \pi_i \right) ^ 2$ where $z_i ^ 0$ represents the ideal redistribution of wealth to traveler $i$ (since no transport hub $v_i$ should make a profit, i.e., $b(v_i) = 0$). For the random variable $Z$, we assume a continuous distribution $F$ with zero mean and a probability density function $f$, and explicitly have
    \begin{equation}
        Z = \left( F - \sum_{k \in \mathcal{S}_{v_i} \setminus \{i\}} \pi_k \right) ^ 2 - \left( F - \sum_{i \in \mathcal{S}_{v_i}} \pi_i \right) ^ 2.
    \end{equation}
So, by \eqref{eqn:expected_utility_PT}, we have $\mathbb{E} [Z] = \sum_{n \in \mathbb{R}} \nu_i(z(n), z_i ^ 0) w_i(f(n))$, where $z(n)$ represents at each transport hub $v_i$ of an arbitrary traveler $i$ the realization of $Z$ with $n \in \mathbb{R}$ available mobility funds. The total utility now under prospect theory for a traveler $i$ is
    \begin{multline}\label{eqn:utility_prospect}
        u_i ^ {\text{PT}}(a) = z_i ^ 0 + \mathbb{E} [Z] - \frac{\bar{\theta}_i}{\theta_i + \eta_i \pi_i} \\
        - \sum_{e \in \rho_i : \rho_i \in \mathcal{P} ^ {(o, d)}} c_e(J_e) - \frac{|\mathcal{S}_{v_i}|}{\sigma(v_i, h_i)} .
    \end{multline}

Next, we show that our mobility game $\mathcal{M}$ under prospect theory is guaranteed to have at least one NE.

\begin{theorem}\label{thm:existence_NASH_PT}
    The mobility game $\mathcal{M}$ under prospect theory admits a pure-strategy NE.
\end{theorem}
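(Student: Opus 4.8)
The plan is to establish, exactly as in the risk-neutral case (Theorem~\ref{thm:existence_NASH}), that the prospect-theoretic game is an exact potential game (Definition~\ref{defn:potential}) whose potential attains a maximizer, and then invoke the standard fact that a maximizer of an exact potential is a pure-strategy NE (Definition~\ref{defn:NE}). Comparing the prospect-theoretic payoff \eqref{eqn:utility_prospect} with \eqref{eqn:utility}, the disincentive term $g(\theta_i)$, the congestion sum $\sum_{e\in\rho_i}c_e(J_e)$, and the waiting term $|\mathcal{S}_{v_i}|/\sigma(v_i,h_i)$ are untouched; only the pricing block changes, $\tau(v_i,\pi_i)\mapsto z_i^0+\mathbb{E}[Z]$. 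So it would suffice to show that this new block, viewed as a function of the action profile, differs from the old one only by something that still carries an exact potential.

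The computation I would do first is to expand $Z$. Writing $\Sigma_{-i}=\sum_{k\in\mathcal{S}_{v_i}\setminus\{i\}}\pi_k$ and $\Sigma=\Sigma_{-i}+\pi_i$, a difference of squares gives $Z=(F-\Sigma_{-i})^2-(F-\Sigma)^2=2F\pi_i+z_i^0$, and likewise $\tau(v_i,\pi_i)=2b(v_i)\pi_i+z_i^0$; in both cases the dependence on the co-travelers' aggregate payment $\Sigma_{-i}$ sits entirely inside $z_i^0$. Hence the realization of $Z$ at budget level $n$ obeys $z(n)-z_i^0=2n\pi_i$. Because the value function \eqref{eqn:valuation_prospect} is a single-exponent power law, for $\pi_i\ge 0$ this yields $\nu_i(z(n),z_i^0)=\pi_i^{\beta}\mu_i(n)$, where $\mu_i(n)=(2n)^{\beta}$ for $n\ge 0$ and $\mu_i(n)=-\lambda(-2n)^{\beta}$ for $n<0$ depends only on $n$. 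By \eqref{eqn:expected_utility_PT} we then get $\mathbb{E}[Z]=C_i\,\pi_i^{\beta}$, with $C_i=\sum_n \mu_i(n)\,w_i(f(n))$ a constant fixed by the distribution $F$ and the parameters $\beta,\lambda,w_i$ --- independent of the action profile. Consequently $z_i^0+\mathbb{E}[Z]=z_i^0+C_i\pi_i^{\beta}$, where $z_i^0$ is exactly $\tau(v_i,\pi_i)$ of \eqref{eqn:pricing} with $b(v_i)$ set to zero, and $C_i\pi_i^{\beta}$ depends only on traveler $i$'s own payment.

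Given this, the potential is essentially free. Relative to the risk-neutral game, the only structural change is that the own-action term $2b(v_i)\pi_i$ appearing inside $\tau$ is replaced by the own-action term $C_i\pi_i^{\beta}$, while the interaction part $z_i^0$ and all congestion, waiting, and disincentive parts are unchanged. So I would take $\Phi_0$, the exact potential constructed in Appendix~\ref{appendix_1} with every $b(v)$ set to zero, and set $\Phi^{\mathrm{PT}}(a)=\Phi_0(a)+\sum_{i\in\mathcal{I}}C_i\pi_i^{\beta}$; a term-by-term check of the deviation identity in Definition~\ref{defn:potential} shows $\Phi^{\mathrm{PT}}$ is an exact potential for the prospect-theoretic game. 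It remains to argue $\Phi^{\mathrm{PT}}$ attains its maximum: the discrete coordinates $(\rho_i,v_i)$ range over the finite sets $\mathcal{P}^{(o,d)}$ and $\mathcal{V}$, the payments $\pi_i$ lie in the compact interval $[0,\bar\theta_i]$ (Definitions~\ref{defn:tokens} and \ref{defn:pricing_mechanism}), and on each fixed discrete profile $\Phi^{\mathrm{PT}}$ is continuous in the payments ($\pi_i^{\beta}$ is continuous for $\beta\in(0,1)$, $w_i$ is continuous on $[0,1]$, $c_e$ does not vary with the payments, and $g$ is continuous on the relevant domain); hence the maximum is attained, and any maximizer is a pure-strategy NE.

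The step I expect to be the main obstacle is the clean factorization $\mathbb{E}[Z]=C_i\pi_i^{\beta}$, which rests on three points I would need to nail down: (i) $\pi_i\ge 0$, so that the sign split in $\nu_i$ matches $\sgn(n)$, which is guaranteed by Definition~\ref{defn:action_set}; (ii) the value function being a single-exponent power law, which is the modeling choice made in \eqref{eqn:valuation_prospect} with $\beta_1=\beta_2=\beta$; and (iii) the series (or integral) defining $C_i$ converging, for which I would add a mild moment/tail assumption on $F$. If instead $\beta_1\ne\beta_2$, or the realizations $z(n)$ were not power-homogeneous in $\pi_i$, the factorization would break; then I would fall back to an ordinal (or weighted) potential argument --- a maximizer of an ordinal potential over the same compact action space is still a pure-strategy NE --- and I would flag that dependence on the homogeneity of the prospect-theoretic parameters, together with the convergence of $C_i$, as the two technical points requiring care.
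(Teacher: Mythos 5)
Your proposal is correct and takes essentially the same route as the paper: both hinge on the identity $z(n)-z_i^0=2n\pi_i$, observe that the resulting prospect term depends only on traveler $i$'s own payment $\pi_i$ (not on $a_{-i}$), and append it, together with the $b(v)=0$ quadratic pricing term, to the risk-neutral potential of Theorem~\ref{thm:existence_NASH} to obtain an exact potential whose optimizer is a pure-strategy NE. Your additional factorization $\mathbb{E}[Z]=C_i\pi_i^{\beta}$ and the explicit compactness/continuity argument for attainment are harmless refinements that the paper bypasses, since it simply keeps the prospect sum $\sum_{n}\nu_i(2n\pi_i)w_i(f(n))$ inside the potential and invokes the standard potential-game conclusion directly.
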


\begin{proof}
    See Appendix \ref{appendix_5}.
\end{proof}

\begin{corollary}\label{cor:PT}
    For the mobility game $\mathcal{M}$ under prospect theory, the sequence of best responses of an arbitrary traveler $i \in \mathcal{I}$ converges to a NE.
\end{corollary}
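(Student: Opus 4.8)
The plan is to mirror the argument used for Corollary~\ref{cor:convergence}, but now with respect to the prospect-theoretic utilities $u_i^{\text{PT}}$ of \eqref{eqn:utility_prospect}. First I would extract from the proof of Theorem~\ref{thm:existence_NASH_PT} the potential function $\Phi^{\text{PT}} : \mathcal{A} \to \mathbb{R}$ that certifies that $\mathcal{M}$ under prospect theory is a potential game in the sense of Definition~\ref{defn:potential}. The key observation is that the only term of $u_i^{\text{PT}}$ coupling traveler $i$ to the others through a transport hub, namely $z_i^0 + \mathbb{E}[Z]$, depends on $a_{-i}$ only through the aggregate $\sum_{k \in \mathcal{S}_{v_i}\setminus\{i\}}\pi_k$, exactly as the risk-neutral pricing term $\tau$ of \eqref{eqn:pricing} does; hence the same Rosenthal-type construction used for Theorem~\ref{thm:existence_NASH} carries over after substituting the prospect-theoretic pricing term for $\tau$, while the congestion and waiting-cost terms contribute their usual potential summands. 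Once a potential exists, any unilateral best-response update by a traveler $i$ weakly increases $\Phi^{\text{PT}}$, and strictly increases it whenever $i$ was not already best-responding.

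Second, I would verify that a best response is well defined for each traveler. The action set $\mathcal{A}_i = \mathcal{P}^{(o,d)} \times \mathcal{V} \times \mathbb{R}_{\geq 0}$ splits into a finite discrete part (routes and hubs) and a one-dimensional continuous part ($\pi_i$). For fixed discrete choices and fixed $a_{-i}$, the map $\pi_i \mapsto u_i^{\text{PT}}(a)$ is continuous in $\pi_i$, and since $\pi_i \le \bar\theta_i$ and the quadratic pricing/value terms are eventually strictly decreasing in $\pi_i$ (cf.\ the remark following Definition~\ref{defn:pricing_mechanism}), the maximum is attained on the compact interval $[0,\bar\theta_i]$. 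Thus the best-response correspondence is nonempty, and along the play the sequence $\Phi^{\text{PT}}(a^{(t)})$ is nondecreasing and bounded above on $\mathcal{A}$.

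Third, I would invoke the convergence theorem for potential games — the same Theorem~2.6 (pp.~33) of \cite{La2016} cited for Corollary~\ref{cor:convergence}, equivalently the finite-improvement-path result of Monderer and Shapley \cite{Monderer1996} — applied to the play of the repeated game with complete information: travelers revise asynchronously, each revision is a best response, $\Phi^{\text{PT}}$ is monotone and bounded, and a profile at which no traveler can strictly improve is by Definition~\ref{defn:NE} a NE. It follows that the sequence of best responses converges to a NE, completing the proof.

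The step I expect to be the main obstacle is the continuous payment coordinate $\pi_i$: the finite-improvement-path machinery is stated for finite action sets, so one must rule out an infinite, non-convergent oscillation in $\pi_i$. The cleanest fix is to note that, conditional on the finitely many choices of route and hub, the payment best response is the \emph{unique} maximizer of a one-dimensional function that increases to a single peak and then strictly decreases, so the continuous coordinate is pinned down by the discrete one and the improvement path effectively lives on a finite set; alternatively one argues directly that the bounded monotone sequence $\Phi^{\text{PT}}(a^{(t)})$ has a limit that must be a fixed point of the best-response map. A secondary technical point is that the value function \eqref{eqn:valuation_prospect} is non-smooth at $z_k = z_i^0$, so the best-response analysis should be carried out via continuity and compactness (or one-sided derivatives) rather than naive first-order conditions.
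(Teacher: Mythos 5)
Your proposal takes essentially the same route as the paper: its proof of this corollary is a one-liner that notes the action set is (effectively) compact — payments lie in $[0,\bar{\theta}_i]$ — and then invokes the convergence results of \cite{Monderer1996} for potential games, which are exactly the ingredients you assemble via the potential from Theorem \ref{thm:existence_NASH_PT} together with compactness and monotonicity of the potential along best responses. Your additional care about the continuous payment coordinate and the non-smoothness of the value function simply fills in details the paper delegates to the cited reference, so there is no substantive difference in approach.
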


\begin{proof}
    It is sufficient to note that the action set $\mathcal{A}_i$ of any traveler $i$ is compact, thus, it follows from the results in \cite{Monderer1996} that the sequence of best responses of any traveler $i \in \mathcal{I}$ converges to a NE.
\end{proof}

Both Theorem \ref{thm:existence_NASH_PT} and Corollary \ref{cor:PT} ensure that the mobility game $\mathcal{M}$ under the prospect-theoretic behavioral model admits a NE and prospect-based travelers will eventually converge to it. Both results establish that we can still ensure that an equilibrium can be reached under certain conditions for the cost and pricing functions.

\section{SIMULATION RESULTS}
\label{Section:SimulationResults}

In this section, we conduct a simulation study to demonstrate the theoretical results of our proposed game-theoretic framework. We consider a simple traffic network (see Fig. \ref{fig:network}) with 5 directed roads $\{e_1, e_2, e_3, e_4, e_5\}$, 3 transport hubs $\{O, A, B\}$, and a final destination transport hub $D$. Note that there are three possible routes $\{e1-e4, e2-e5, e2-e3-e4\}$. As it is standard, we consider two independent and identically distributed Gaussian distributions $F \sim N(0, 10)$ and $Q_i \sim N(0, 1)$ for the budget at different transport hubs and for the socioeconomic characteristic $\eta_i$, respectively. Also, we consider that all the travelers have identical wallets with $\bar{\theta}_i = 10$, and $\theta_i = 2$ for each $i$ traveling from the origin $O$ to the destination $D$ (see Fig. \ref{fig:network}).

\begin{figure}[ht]
    \centering
    \includegraphics[width = 1.0 \columnwidth]{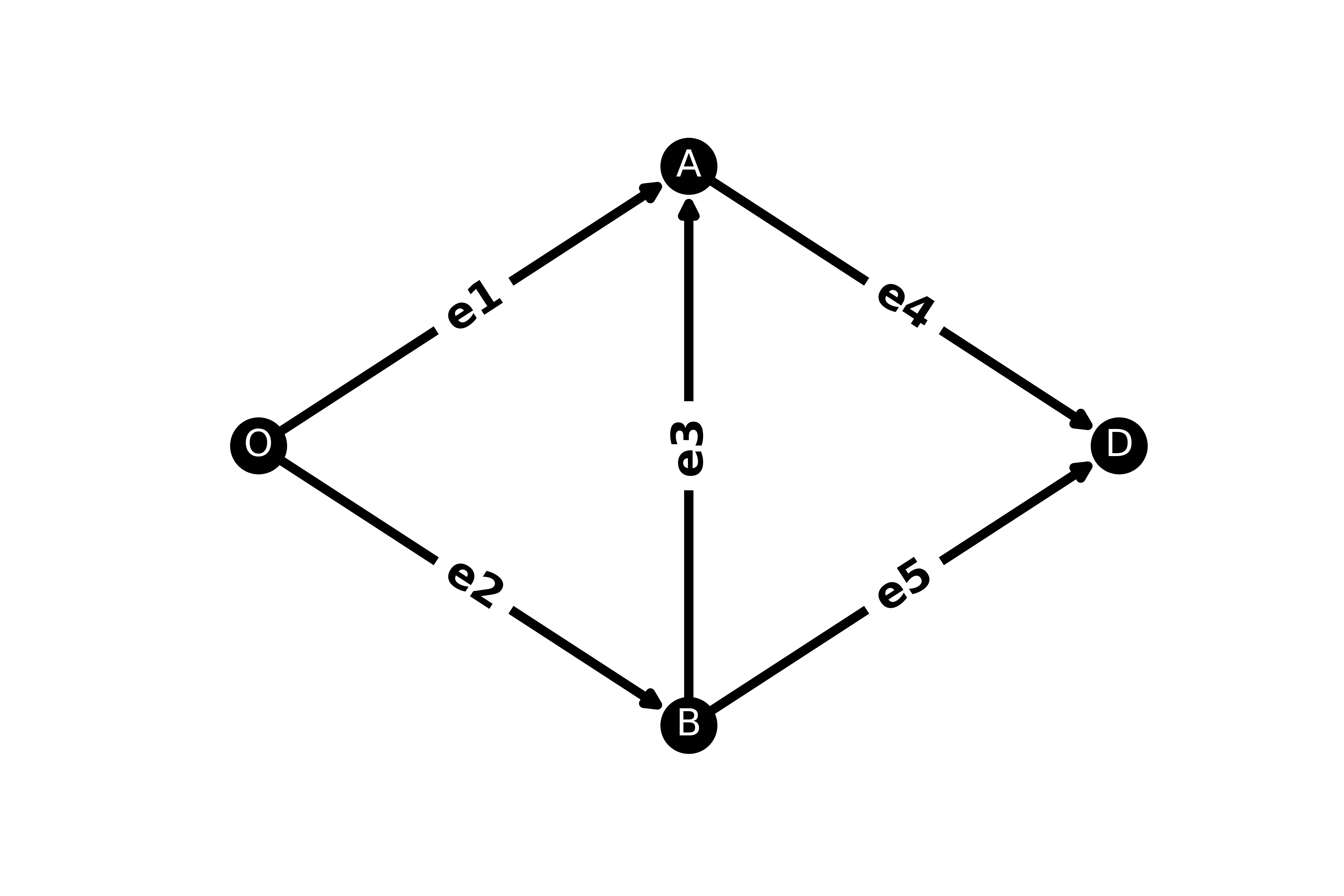}
    \caption{A visualization of a simple transportation network $\mathcal{G} = (\mathcal{V}, \mathcal{E})$.}
    \label{fig:network}
\end{figure}

\subsection{PoA under Rational Choice Theory}

\begin{figure}[ht]
    \centering
    \includegraphics[width = 1 \columnwidth]{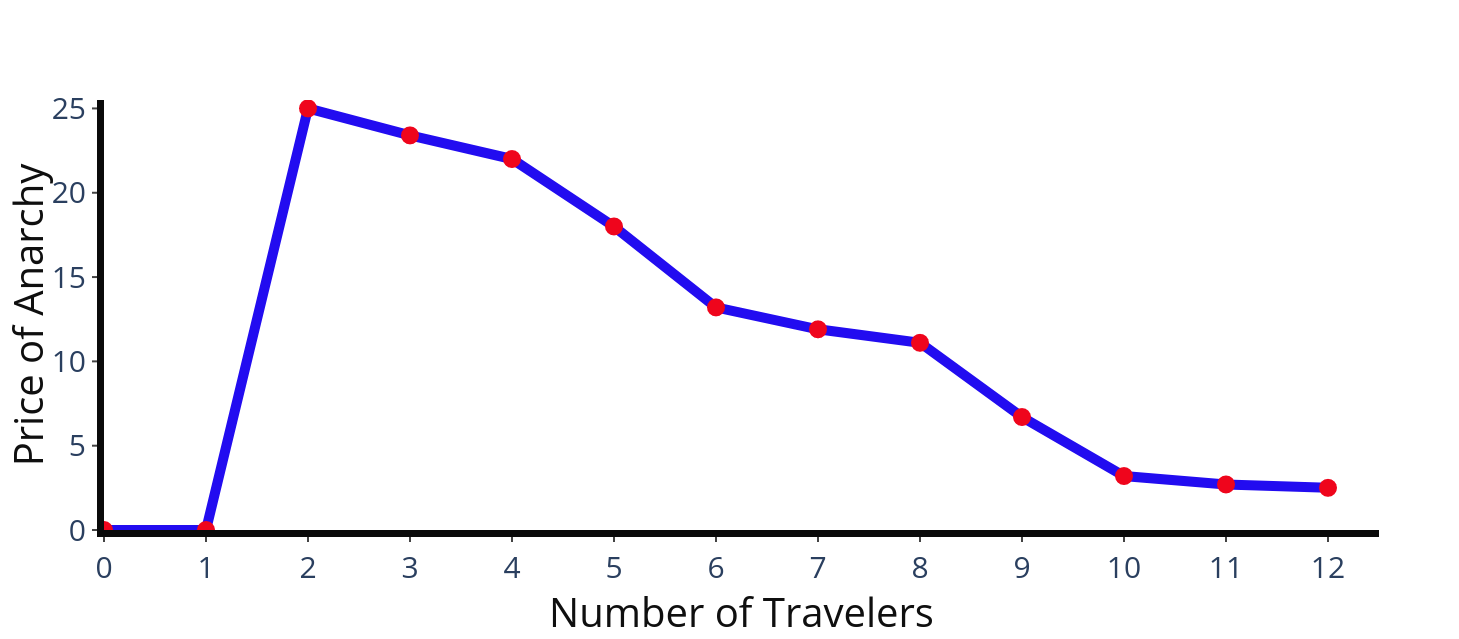}
    \caption{Price of Anarchy computed using \eqref{defn:poa} with respect to the number of travelers.}
    \label{fig:simple_case}
\end{figure}

\begin{figure}[ht]
    \centering
    \includegraphics[width = 1 \columnwidth]{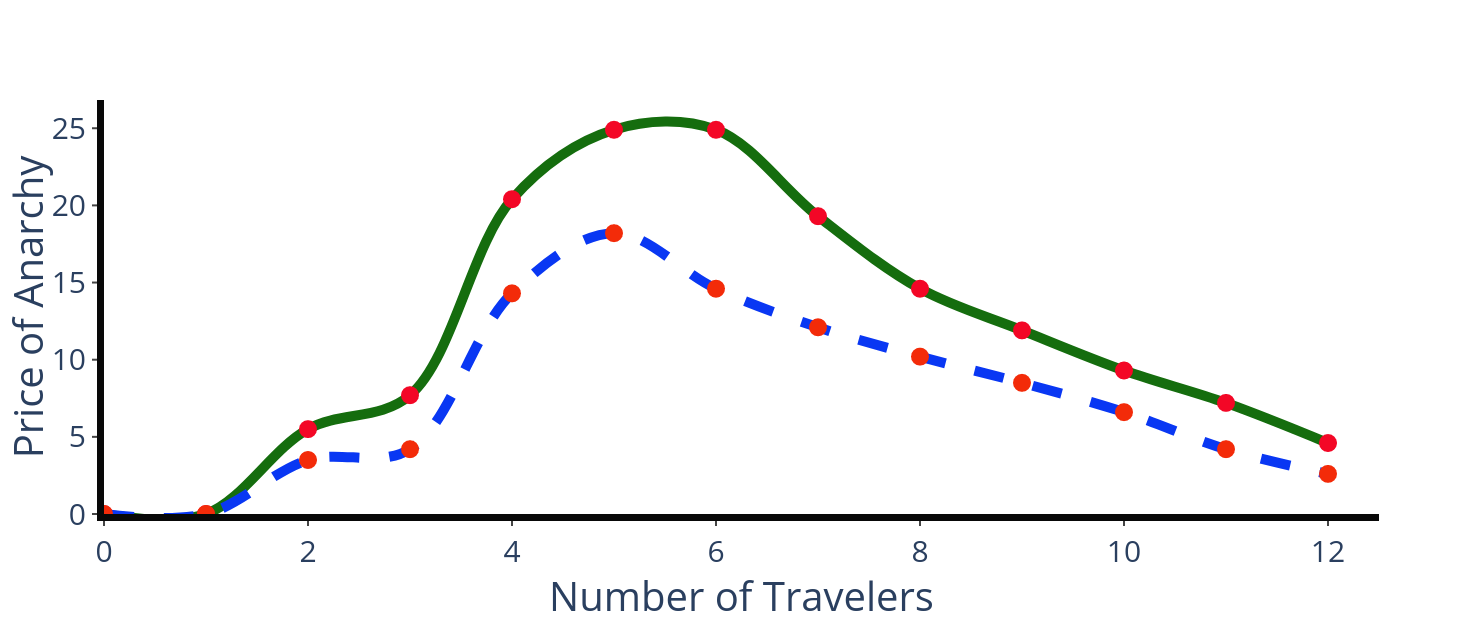}
    \caption{Price of Anarchy computed using \eqref{defn:poa} with respect to the number of travelers under different pricing mechanisms: $\tau ^ {(1)}$ and $\tau ^ {(2)}$.}
    \label{fig:simple_case_Comparison}
\end{figure}

In Fig. \ref{fig:simple_case}, we illustrate how the PoA under the rational choice theory deviates as more travelers (up to $I$ = 12) join the mobility system. In Fig. \ref{fig:simple_case_Comparison}, we compare the outcomes for different choices of pricing and latency functions. Here, we let the number of travelers increase from $I = 2$ to $I = 12$, and then we compute the PoA with the same linear travel time latency function $c_e(J_e) = 5 J_e + 3$ but with different pricing functions: (i) $\tau ^ {(1)} (v_i, \pi_i) = \left( b(v_i) - \sum_{k \in \mathcal{S}_{v_i} \setminus \{i\}} \pi_k \right) ^ 2 - \left( b(v_i) - \sum_{i \in \mathcal{S}_{v_i}} \pi_i \right) ^ 2$, and (ii) $\tau ^ {(2)} (v_i, \pi_i) = \left( b(v_i) - \sum_{k \in \mathcal{S}_{v_i} \setminus \{i\}} \pi_k \right) ^ {2 / 3} - \left( b(v_i) - \sum_{i \in \mathcal{S}_{v_i}} \pi_i \right) ^ {2 / 3}$. We observe that the non-linearity of the pricing functions worsens the PoA. However, we can still see that as the number of travelers increases, efficiency improves. This verifies our Theorem \ref{thm:PoA}.

\begin{remark}\label{rmk:simulation}
    Unfortunately, the promise of small inefficiency as the number of travelers increases (Theorem \ref{thm:PoA}) cannot be guaranteed if we consider a quadratic travel time latency function or a highly non-linear pricing function that exhibits increased non-linearity.
\end{remark}

Given Remark \ref{rmk:simulation}, our modeling framework can be implemented in real-world mobility systems as long as we can control the nonlinearity in the travel time latencies and the pricing functions. Since nonlinearity in the functions can been seen to affect the PoA negatively by a considerable margin (see Fig. \ref{fig:simple_case_Comparison}), it is most important how we model the travel time latency and the pricing functions.

For the network illustrated in Fig. \ref{fig:network} with $I = 12$ travelers, we summarize one instance of the mobility game $\mathcal{M}$ with the NE strategies of each traveler in Table \ref{tab:numerical}. We have assumed at random that each transport hub starts with a budget of $b(A) = 9.2$, and $b(B) = 15.1$ for transport hubs $A$, and $B$, respectively. Table \ref{tab:numerical} can be read as follows: a NE strategy for traveler $i = 1$ is from $O$, take the route $\rho_1 = \{e1 - e4\}$ using a bike with a mobility price of $\tau = 1$, i.e., they receive a payment in the amount of \$1.

\begin{table}[ht]\label{tab:numerical}
    \caption{Mobility game $\mathcal{M}$ at NE with 12 travelers, 3 possible routes $\{\rho_1 = e1-e4, \rho_2 = e2-e5, \rho_3 = e2-e3-e4\}$, 3 transport hubs $\{O, A, B\}$, and a final destination hub $D$.}
    \centering
        \begin{tabular}{c|c|c|c|c|c}
            $i$ & $h_i$ & $\rho_i ^ 1$ & $\rho_i ^ 2$ & $\rho_i ^ 3$ & $\tau$ \\ \midrule
            1 & bike & 1 & 0 & 0 & 1.0 \\
            2 & bike & 1 & 0 & 0 & 1.0 \\
            3 & bike & 1 & 0 & 0 & 1.0 \\
            4 & bus & 0 & 0 & 1 & 0.75 \\
            5 & bus & 0 & 0 & 1 & 0.75 \\
            6 & bus & 0 & 0 & 1 & 0.75 \\
            7 & car & 0 & 1 & 0 & -2.5 \\
            8 & car & 0 & 1 & 0 & -3.5 \\
            9 & car & 0 & 1 & 0 & -3.5 \\
            10 & car & 0 & 1 & 0 & -3.5 \\
            11 & car & 0 & 1 & 0 & -5.25 \\
            12 & car & 0 & 1 & 0 & -5.25 \\ \midrule
        \end{tabular}
\end{table}

\subsection{PoA under Prospect Theory}

In this subsection, we perform a PoA simulation study under prospect theory. We consider the same network as before (see Fig. \ref{fig:network}) and $I = 12$ travelers. For simplicity, all travelers have the same weighting and valuation functions given by \eqref{eqn:prelec_weight} and \eqref{eqn:valuation_prospect}, respectively, where $(\beta_1, \beta_2, \lambda) = (0.88, 0.88, 2.25)$. All three specific values are the ones recommended by experimental studies on human subjects \cite{Booij2010}. In Fig. \ref{fig:prospect}, we have compared the PoA to the rational index $\beta_3$ and observe the somewhat random perturbations of the PoA depending on $\beta_3$. Since $\beta_3$ represents the distortion of a decision-maker's probability perceptions (where 0 means irrationality), it is expected to see great deviations in the efficiency of the mobility system. When travelers are considered neither rational or irrational ($\beta_3 \approx 0.5$) we may get more inefficient NE even though the social optimum is evaluated at maximum rationality ($\beta_3 = 1$).

\begin{figure}[h]
    \centering
    \includegraphics[width = 1 \columnwidth]{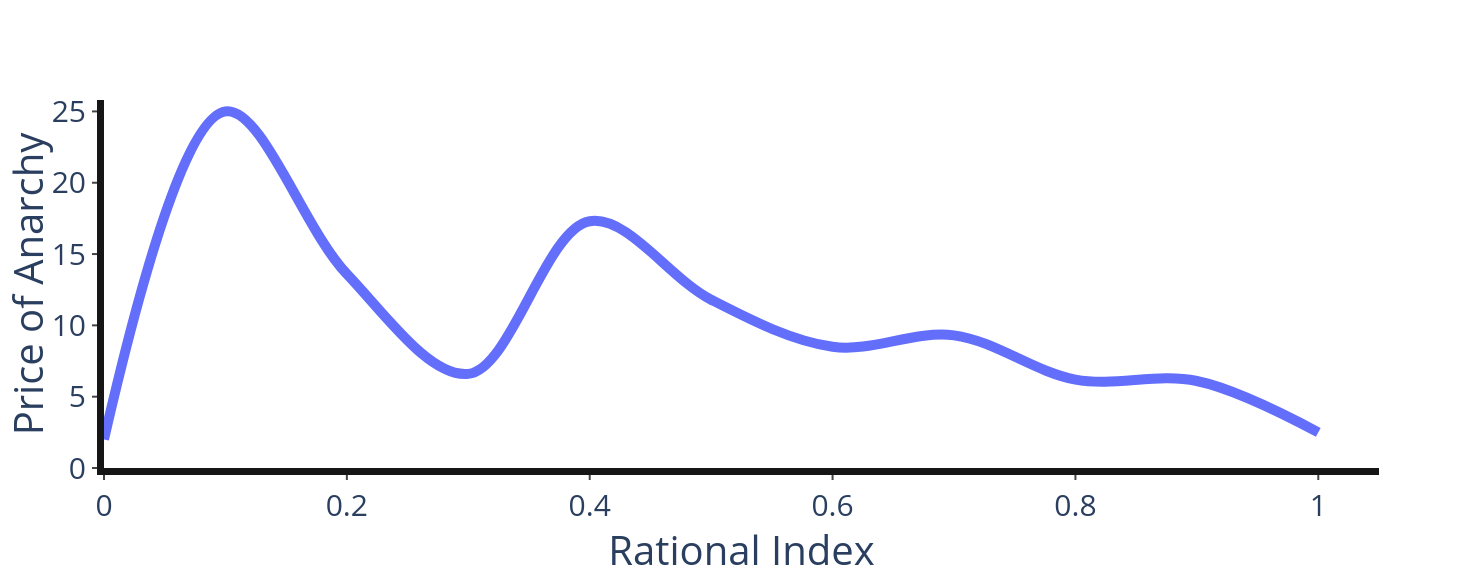}
    \caption{PoA with respect to the \emph{rational index} $\beta_3$ under prospect theory.}
    \label{fig:prospect}
\end{figure}

\section{CONCLUDING REMARKS}
\label{Section:Conclusion}

In this paper, we proposed a game-theoretic framework (called mobility game) to study the behavioral interactions of travelers in a multimodal transportation network. First, we formulated a repeated non-cooperative routing game with a finite number of travelers. In our first result, we showed that the mobility game admits a NE under the rational choice theory. In our second result, we derived a bound for the PoA. Although we cannot have uniqueness at an equilibrium, our upper bound guarantees that the inefficiencies is as low as possible if the number of travelers is large enough (which is naturally expected in a mobility system). Next, we extended our framework to consider the subjective behavior of travelers under prospect theory, and showed that our mobility game admits a NE. Finally, we performed a simulation study under different mobility scenarios and behavioral models and showed how the PoA deviates as the number of travelers increases.

\subsection{Implementation}

In this subsection, we outline how our proposed framework can be potentially implemented. We consider a major metropolitan area with an extensive road and public transit infrastructure; a good example is Boston. Several key areas in Boston are connected by roads, buses, light rail, and bikes. These areas can serve as transport hubs from which travelers can utilize any of the available modes of transportation. We can apply the MaaS concept and offer on each transport hub travel services (e.g., navigation, location, booking, payment) to all passing travelers. Information can be shared among all travelers via a ``mobility app," which allows travelers to access the services on the transport hubs. Using this app, travelers will be able to pay for their travel needs and, at the same time, receive mobility payments. For example, a traveler who informs the app and uses a bike multiple times (per day or per week) can receive mobility payments. In addition, travelers travel multiple times and interact with each other more than once. So, travelers seek to move from one place to another while competing with many other travelers, use the transport hubs to access their preferred mode of transportation, and pay using a mobility app. Each mode of transportation offers different benefits in utility; for example, a car is more convenient than a bus and is expected to be in high demand. This naturally will lead to inefficiency and congestion.

The mobility game $\mathcal{M}$ with the utility structure as defined in \eqref{eqn:utility} captures the key factors that may play a role in a traveler's decision-making. It can be seen by Theorem \ref{thm:existence_NASH} and Corollary \ref{cor:convergence} that an equilibrium exists and can be reached by the travelers without direct intervention from a central authority. The particular pricing mechanism we have proposed in \eqref{eqn:pricing} ensures all travelers with the computational power of their cellphone can quickly derive the NE strategy (route, transport hub, payment). This is important as we can avoid solving a mixed-integer nonlinear program for all the travelers in the mobility system. In addition, by Theorem \ref{thm:PoA} we can guarantee that the inefficiency of the mobility system stays low as long as the number of travelers remains large (something that is expected in a typical mobility system). We can see from the simulation results (Figs. \ref{fig:simple_case} and \ref{fig:simple_case_Comparison}) that under different pricing mechanisms (different quadratic functions), the PoA decreases by a considerable margin as we increase the number of travelers. Thus, even though we cannot guarantee the uniqueness of a NE, we can ensure that all NE are similarly efficient and nearly as efficient as the social optimum as long as the number of travelers is high.

Using prospect theory, we can also consider how a traveler can feel uncertain about whether they may receive mobility payments for choosing a more sustainable mode of transportation (e.g., bike). Under certain conditions, we show that indeed a NE exists (Theorem \ref{thm:existence_NASH_PT}) and it can be reached by the travelers as they can travel from the hub that is nearest to their home to the hub that is nearest to their work (Corollary \ref{cor:PT}). Thus, our modeling framework is proved to lead to a NE under two different behavioral models and capture the impact of the travelers' decision-making.

\subsection{Limitations and Future Work}

One important limitation of our framework is the assumption of complete information. Realistically, we cannot expect travelers to have accurate and complete knowledge of other travelers or the system's capabilities (network, road capacities). A potential direction for future research should relax this assumption by only allowing travelers to know their own actions and utilities. This is a standard technique in Bayesian game-theoretic analyses. Another interesting direction for future research is to expand the current framework by explicitly designing the socially-efficient pricing functions to achieve the best possible equilibrium in the mobility system using techniques from mechanism design. Furthermore, to showcase the benefits of the proposed game-theoretic modeling framework a necessary extension of our work is using machine learning techniques \cite{Balcan2005,Qian2021,Vineyard2012,Lim2020,Kim2020} with real-life data.

\section*{Acknowledgment}
The authors would like to thank the anonymous reviewers for their constructive and valuable comments and suggestions that helped improve the quality and presentation of the paper.

\appendices

\section{Proof of Theorem \ref{thm:existence_NASH}}
\label{appendix_1}

\begin{proof}
    As it is standard in existence of NE results for potential games (see Chapter 2 \cite{La2016}), we start by stating the explicit form of the potential function for the mobility game $\mathcal{M}$, i.e.,
        \begin{multline}\label{eqn:thm1-first}
            \Phi(a) = \sum_{v \in \mathcal{V}} \left( b(v) - \sum_{i \in \mathcal{S}_{v}} \pi_i \right) ^ 2 - \sum_{i \in \mathcal{I}} \frac{\bar{\theta}_i}{\theta_i + \eta_i \pi_i} \\
            - \sum_{e \in \mathcal{E}} \sum_{k = 1} ^ {J_e} c_e(k) - \sum_{v \in \mathcal{V}} \sum_{i \in \mathcal{I}} \frac{|\mathcal{S}_{v}| (|\mathcal{S}_{v}| + 1)}{2 \sigma(v, h_i)},
        \end{multline}
    Our goal now is to verify Definition \ref{defn:potential}, thus show that $\mathcal{M}$ is a potential game. Mathematically, for an arbitrary traveler $i$ and for any two actions $a_i = (\rho_i \in \mathcal{P} ^ {(o, d)}, v_i, \pi_i)$ and $a_i ' = (\rho_i ' \in \mathcal{P} ^ {(o, d)}, v_i ', \pi_i ')$, we need to show $\Phi(a_i, a_{- i}) - \Phi(a_i ', a_{- i}) = u_i(a_i, a_{- i}) - u_i(a_i ', a_{- i})$.
    
    Note here that any traveler $i$ that deviates in their action $a_i$ to $a_i '$ by changing their route $\rho_i$ to $\rho_i '$ that does not require an additional service $j$ on any new roads $e \in \rho_i '$, then traveler $i$'s impact to congestion is negligent. Thus, traveler $i$ can change routes and still travel along an existing service $j$ in road $e \in \rho_i '$ if that service $j$ has not reached its maximum capacity $\bar{\varepsilon}_j$. If traveler $i$ changes their route from $\rho_i$ to $\rho_i '$ and the traveler capacities of all services on that route are not maxed out, then the number of services $J_e$ does not change in the roads that remain the same along both routes (any $e \in \rho_i \cap \rho_i '$). However, the number of services $J_e$ increases by one in any roads $e \in \rho_i ' \setminus \rho_i$ since we require an additional service $j$ in road $e$ for traveler $i$. Thus, we have $\sum_{e \in \rho_i '} c_e(J_e) = \sum_{e \in \rho_i ' \cap \rho_i} c_e(J_e) + \sum_{e \in \rho_i ' \setminus \rho_i} c_e(J_e + 1)$.
    
    If traveler $i$ changes their transport hub $v_i$ to $v_i '$, then their new waiting cost is $\frac{|\mathcal{S}_{v_i '}| + 1}{\sigma(v_i ', h)}$, where $\mathcal{S}_{v_i '} = \{k \in \mathcal{I} \setminus \{i\} \; | \; v_k = v_i '\}$. Along the same lines for $\pi_i$ and $\pi_i '$ and we can write
        \begin{multline}\label{eqn:eqn:thm1-second}
            u_i(a_i ', a_{- i}) =  \left( b(v_i ') - \sum_{k \in \mathcal{S}_{v_i '} \setminus \{i\}} \pi_k \right) ^ 2 \\
            - \left( b(v_i ') - \pi_i ' - \sum_{k \in \mathcal{S}_{v_i '} \setminus \{i\}} \pi_k \right) ^ 2 - \frac{\bar{\theta}_i}{\theta_i + \eta_i \pi_i '} - \frac{|\mathcal{S}_{v_i '}|}{\sigma(v_i ', h_i)} \\
            - \left[ \sum_{e \in \rho_i ' \cap \rho_i} c_e(J_e) + \sum_{e \in \rho_i ' \setminus \rho_i} c_e(J_e + 1) \right].
        \end{multline}
    We subtract \eqref{eqn:eqn:thm1-second} from \eqref{eqn:utility} to get
        \begin{multline}
            u_i(a_i, a_{- i}) - u_i(a_i ', a_{- i}) = \\
            \left( b(v_i) - \sum_{k \in \mathcal{S}_{v_i} \setminus \{i\}} \pi_k \right) ^ 2 - \left( b(v_i) - \sum_{i \in \mathcal{S}_{v_i}} \pi_i \right) ^ 2 \\
            - \left( b(v_i ') - \sum_{k \in \mathcal{S}_{v_i '} \setminus \{i\}} \pi_k \right) ^ 2 + \left( b(v_i ') - \pi_i ' - \sum_{k \in \mathcal{S}_{v_i '} \setminus \{i\}} \pi_k \right) ^ 2 \\
            - \sum_{e \in \rho_i \setminus \rho_i '} c_e(J_e) + \sum_{e \in \rho_i ' \setminus \rho_i} c_e(J_e + 1) \\
            - \frac{|\mathcal{S}_{v_i}|}{\sigma(v_i, h_i)} + \frac{|\mathcal{S}_{v_i '}| + 1}{\sigma(v_i ', h_i)} + \frac{\eta_i \bar{\theta}_i (\pi_i - \pi_i ')}{(\theta_i + \eta_i \pi_i)(\theta_i + \eta_i \pi_i ')},
        \end{multline}
    where $\sum_{e \in \rho_i \setminus \rho_i '} c_e(J_e) = \sum_{e \in \rho_i} c_e(J_e) - \sum_{e \in \rho_i ' \cap \rho_i} c_e(J_e)$.
    
    Now, we denote all four components of \eqref{eqn:thm1-first} as follows: $\phi_1 = \sum_{v \in \mathcal{V}} \left( b(v) - \sum_{i \in \mathcal{S}_{v}} \pi_i \right) ^ 2$, $\phi_2 = - \sum_{i \in \mathcal{I}} \frac{\bar{\theta}_i}{\theta_i + \eta_i \pi_i}$, $\phi_3 = - \sum_{e \in \mathcal{E}} \sum_{k = 1} ^ {J_e} c_e(k)$, and $\phi_4 = - \sum_{v \in \mathcal{V}} \sum_{i \in \mathcal{I}} \frac{|\mathcal{S}_{v}| (|\mathcal{S}_{v}| + 1)}{2 \sigma(v, h_i)}$. Step by step, we compute the difference of all four different $\phi$'s under $a_i$ and $a_i '$ as follows
        \begin{multline}\label{eqn:thm1-2nd}
            \phi_1(a_i, a_{- i}) - \phi_1(a_i ', a_{- i}) =  \left( b(v_i) - \sum_{k \in \mathcal{S}_{v_i} \setminus \{i\}} \pi_k \right) ^ 2 \\
            - \left( b(v_i) - \sum_{i \in \mathcal{S}_{v_i}} \pi_i \right) ^ 2 - \left( b(v_i ') - \sum_{k \in \mathcal{S}_{v_i '} \setminus \{i\}} \pi_k \right) ^ 2 \\
            + \left( b(v_i ') - \pi_i ' - \sum_{k \in \mathcal{S}_{v_i '} \setminus \{i\}} \pi_k \right) ^ 2,
        \end{multline}
        \begin{multline}\label{eqn:thm1-3rd}
            \phi_2(a_i, a_{- i}) - \phi_2(a_i ', a_{- i}) = \\
            - \sum_{i \in \mathcal{I}} \frac{\bar{\theta}_i}{\theta_i + \eta_i \pi_i} + \left[ \sum_{k \in \mathcal{I} \setminus \{i\}} \frac{\bar{\theta}_k}{\theta_k + \eta_k \pi_k} + \frac{\bar{\theta}_i}{\theta_i + \eta_i \pi_i '} \right] \\
            = \sum_{i \in \mathcal{I}} \frac{\eta_i \bar{\theta}_i (\pi_i - \pi_i ')}{(\theta_i + \eta_i \pi_i)(\theta_i + \eta_i \pi_i ')}.
        \end{multline}
        \begin{multline}\label{eqn:thm1-4th}
            \phi_3(a_i, a_{- i}) - \phi_3(a_i ', a_{- i}) = \\
            - \sum_{e \in \mathcal{E}} \sum_{k = 1} ^ {J_e} c_e(k) + \sum_{e \in \mathcal{E} \setminus \{e \in \rho_i ' \}} \sum_{k = 1} ^ {J_e - 1} c_e(k) + \sum_{e \in \mathcal{E} \setminus \{e \in \rho_i\}} \sum_{k = 1} ^ {J_e + 1} c_e(J_e) \\
            = - \sum_{e \in \rho_i \setminus \rho_i '} c_e(J_e) + \sum_{e \in \rho_i ' \setminus \rho_i} c_e(J_e + 1),
        \end{multline}
        \begin{multline}\label{eqn:thm1-5th}
            \phi_4(a_i, a_{- i}) - \phi_4(a_i ', a_{- i}) = \sum_{v \in \mathcal{V}} \sum_{i \in \mathcal{I}} \frac{|\mathcal{S}_{v}| (|\mathcal{S}_{v}| + 1)}{2 \sigma(v, h_i)} \\
            - \sum_{v \in \mathcal{V} \setminus \{v_i\} \cup \{v_i '\}} \sum_{k \in \mathcal{I} \setminus \{i\}} \left[ \frac{|\mathcal{S}_{v}| (|\mathcal{S}_{v}| + 1)}{2 \sigma(v, h_k)} \right] - \frac{|\mathcal{S}_{v_i}| (|\mathcal{S}_{v_i}| - 1)}{2 \sigma(v_i, h_i)}
            \\ - \frac{(|\mathcal{S}_{v_i '}| + 1) (|\mathcal{S}_{v_i '}| + 2)}{2 \sigma(v_i ', h_i)} = \frac{|\mathcal{S}_{v_i}|}{\sigma(v_i, h_i)} - \frac{|\mathcal{S}_{v_i '}| + 1}{\sigma(v_i ', h_i)},
        \end{multline}
    We define $\Phi(a_i, a_{- i}) - \Phi(a_i ', a_{- i}) = \sum_{k = 1} ^ 4 \left[ \phi_k(a_i, a_{- i}) - \phi_k(a_i ', a_{- i}) \right]$. We take the sum of \eqref{eqn:thm1-2nd} - \eqref{eqn:thm1-5th} and thus, we obtain $\Phi(a_i, a_{- i}) - \Phi(a_i ', a_{- i}) = u_i(a_i, a_{- i}) - u_i(a_i ', a_{- i})$. This proves that the mobility game $\mathcal{M}$ is a potential game and so following from key results (see \cite{La2016}) we conclude that $\mathcal{M}$ admits a pure-strategy NE.
\end{proof}

\section{Proof of Lemma \ref{lemma1}}
\label{appendix_2}

\begin{lemma}\label{lemma1}
    Let $a_i ^ * = (\rho_i ^ *, v_i ^ *, \pi_i ^ *)$ denote the optimal action of traveler $i \in \mathcal{I}$, define $\tilde{b} ^ 2 = \sum_{v \in \mathcal{V}} \left( \sum_{h \in \mathcal{H}} b(v, h) \right) ^ 2 = \sum_{v \in \mathcal{V}} \left( b(v) \right) ^ 2$, and at a NE: $J_3(a) = \sum_{i \in \mathcal{I}} \Big[ \left( b(v_i ^ *) - \sum_{i \in \mathcal{S}_{v_i ^ *}} \pi_i \right) ^ 2 - \left( b(v_i ^ *) - \pi_i ^ * - \sum_{k \in \mathcal{S}_{v_i ^ *} \setminus \{i\}} \pi_k \right) ^ 2 - \frac{\bar{\theta}_i}{\theta_i + \eta_i \pi_i ^ *} \Big]$. Then, we have
        \begin{multline}\label{eqn:lemma1_0}
            \sum_{i \in \mathcal{I}} \Bigg[ \left( b(v_i ^ *) - \sum_{k \in \mathcal{S}_{v_i ^ *} \setminus \{i\}} \pi_k \right) ^ 2 \\
            - \left( b(v_i ^ *) - \pi_i ^ * - \sum_{k \in \mathcal{S}_{v_i ^ *} \setminus \{i\}} \pi_k \right) ^ 2 - \frac{\bar{\theta}_i}{\theta_i + \eta_i \pi_i ^ *} \Bigg] \\
            \leq J_3(a ^ *) - \sqrt{(\tilde{b} ^ 2 + 2 (J_3(a ^ {\text{Nash}}) - I \bar{\theta}_i)) (\tilde{b} ^ 2 + 2 ( J_3(a ^ *) - I \bar{\theta}_i))} \\
            - 4 I \bar{\theta}_i - \tilde{b} ^ 2 \\
            - \tilde{b} \left( \sqrt{\tilde{b} ^ 2 + 2 (J_3(a ^ {\text{Nash}}) - I \bar{\theta}_i)} + \sqrt{\tilde{b} ^ 2  + 2 (J_3(a ^ *) - I \bar{\theta}_i)} \right).
        \end{multline}
\end{lemma}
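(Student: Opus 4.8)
The plan is to prove \eqref{eqn:lemma1_0} by direct algebra combined with repeated use of the Cauchy--Schwarz inequality, after converting the traveler-indexed sum into a hub-indexed one. The qualifier ``at a NE'' in the statement of $J_3$ I read as saying that the payments $\pi_k$ of travelers other than $i$ are equilibrium values while $\pi_i^*$ is the optimal payment of traveler $i$, so the left-hand side measures the pricing-plus-disincentive part of $\sum_i u_i(a_i^*,a_{-i}^{\text{Nash}})$. Since every traveler $i$ contributes only at its own optimal hub $v_i^*$, I write $\Pi_v=\sum_{k\in\mathcal S_v}\pi_k$ for the aggregate payment collected at hub $v$, $\Pi_v^*=\sum_{k\in\mathcal S_v}\pi_k^*$ for the optimal aggregate, and $\Pi_v^{-i}=\Pi_v-\pi_i$. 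The first two (quadratic) terms of each bracket on the left-hand side form a difference of squares, which I factor via $X^2-Y^2=(X-Y)(X+Y)$:
\[
\Bigl(b(v_i^*)-\Pi_{v_i^*}^{-i}\Bigr)^2-\Bigl(b(v_i^*)-\pi_i^*-\Pi_{v_i^*}^{-i}\Bigr)^2=\pi_i^*\Bigl(2b(v_i^*)-2\Pi_{v_i^*}^{-i}-\pi_i^*\Bigr).
\]
Summing over $i$ and regrouping by hub separates three kinds of contributions: the ``budget'' block $\sum_v b(v)^2=\tilde b^2$; bilinear blocks $\sum_v b(v)\Pi_v$ and $\sum_v b(v)\Pi_v^*$; and purely quadratic payment blocks $\sum_v\Pi_v^2$ and $\sum_v(\Pi_v^*)^2$. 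Subtracting $J_3(a^*)$ from the left-hand side cancels the disincentive sum $\sum_i\bar\theta_i/(\theta_i+\eta_i\pi_i^*)$ and the pricing quadratics that coincide at the optimum, which is why the leading term $J_3(a^*)$ appears on the right-hand side.

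The second ingredient is completing the square inside $J_3$. Expanding the two quadratics of $J_3(a)$ and regrouping by hub shows that $\tilde b^2+2J_3(a)$ equals a sum of squares of aggregate payments minus $2\sum_i\bar\theta_i/(\theta_i+\eta_i\pi_i^*)$; then the elementary bound $0\le\bar\theta_i/(\theta_i+\eta_i\pi_i^*)\le\bar\theta_i$ implies that $\Sigma^{\text{Nash}}:=\tilde b^2+2\bigl(J_3(a^{\text{Nash}})-I\bar\theta_i\bigr)$ and $\Sigma^*:=\tilde b^2+2\bigl(J_3(a^*)-I\bar\theta_i\bigr)$ are bounded above by the corresponding completed squares $\sum_v\Pi_v^2$ and $\sum_v(\Pi_v^*)^2$ (and are non-negative, provided the hub budgets are large enough relative to $I\bar\theta_i$ — a condition I would state explicitly, since the square roots must be defined). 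Granting this, Cauchy--Schwarz gives $\sum_v b(v)\Pi_v\le\tilde b\,\sqrt{\Sigma^{\text{Nash}}}$ and $\sum_v b(v)\Pi_v^*\le\tilde b\,\sqrt{\Sigma^*}$, which is the source of the $\tilde b(\sqrt{\cdot}+\sqrt{\cdot})$ block; similarly $\sum_v\Pi_v\Pi_v^*\le\sqrt{\Sigma^{\text{Nash}}}\sqrt{\Sigma^*}$ produces the product-of-radicals block. The residual $-4I\bar\theta_i-\tilde b^2$ is what is left after collecting the stray $\tilde b^2$ and applying $-\sum_i\bar\theta_i/(\theta_i+\eta_i\pi_i^*)\ge -I\bar\theta_i$ in the two places it is needed.

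Putting the four blocks together — the leading $J_3(a^*)$ from the cancellation in the factorization step, the two $\tilde b\sqrt{\cdot}$ terms and the $-\sqrt{\Sigma^{\text{Nash}}\Sigma^*}$ term from the three Cauchy--Schwarz bounds, and the constant $-4I\bar\theta_i-\tilde b^2$ from the disincentive estimates — reproduces \eqref{eqn:lemma1_0}. I expect the difficulty to lie in bookkeeping rather than in any one deep step: keeping straight which payments are starred (optimal) and which are at equilibrium as a single traveler deviates, handling the $|\mathcal S_v|$ multiplicities correctly when regrouping traveler sums into hub sums, and — most delicately — verifying that $\Sigma^{\text{Nash}}$ and $\Sigma^*$ are non-negative so that every square root and every Cauchy--Schwarz step is valid and sign-consistent. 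That last point is exactly where the implicit hypothesis on the size of the budgets $b(v)$ enters, and steering the resulting chain of half-a-dozen estimates so that it points toward \eqref{eqn:lemma1_0} rather than away from it will require careful sign tracking throughout.
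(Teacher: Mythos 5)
Your overall skeleton (expand the difference of squares, regroup traveler sums into hub sums, use Cauchy--Schwarz on the bilinear blocks, and express everything through the quantities $\Delta=\tilde b^2+2(J_3(a^{\text{Nash}})-I\bar\theta_i)$ and $\Delta^*=\tilde b^2+2(J_3(a^*)-I\bar\theta_i)$) is the same as the paper's, but your central ``second ingredient'' contains a genuine gap. First, the claimed identity is false: expanding $J_3$ and regrouping by hub gives, up to the disincentive terms,
\begin{equation*}
J_3(a^*)+\sum_{i\in\mathcal I}\frac{\bar\theta_i}{\theta_i+\eta_i\pi_i^*}
= 2\sum_{v\in\mathcal V} b(v)\,\Pi_v^* - 2\sum_{v\in\mathcal V}\bigl(\Pi_v^*\bigr)^2 + \sum_{i\in\mathcal I}(\pi_i^*)^2,
\end{equation*}
which contains a bilinear term $2\sum_v b(v)\Pi_v^*$ and a \emph{negative} multiple of $\sum_v(\Pi_v^*)^2$; adding $\tilde b^2$ does not turn this into ``a sum of squares of aggregate payments minus the disincentives,'' so there is no completed-square identity of the form you assert. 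Second, and more damaging, the direction of your bound is inverted. You claim $\Sigma^{\text{Nash}}\le\sum_v\Pi_v^2$ and $\Sigma^*\le\sum_v(\Pi_v^*)^2$ and then invoke Cauchy--Schwarz to get $\sum_v b(v)\Pi_v\le\tilde b\sqrt{\Sigma^{\text{Nash}}}$; but Cauchy--Schwarz only gives $\sum_v b(v)\Pi_v\le\tilde b\sqrt{\sum_v\Pi_v^2}$, and to replace $\sqrt{\sum_v\Pi_v^2}$ by something involving $\Sigma^{\text{Nash}}$ you need an \emph{upper} bound on the aggregate payments in terms of $\Sigma^{\text{Nash}}$, i.e.\ the reverse of what you claim to have. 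The same problem hits your bound on the cross term $\sum_v\Pi_v\Pi_v^*$. As written, the chain of estimates cannot be closed.

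What the paper does at this point is the step you are missing: it applies Cauchy--Schwarz to the hub-regrouped expansion of $J_3$ itself, which produces a quadratic inequality in the unknown $X=\sqrt{\sum_v(\Pi_v^*)^2}$, namely $X^2-\tilde b X-\tfrac12\bigl(J_3(a^*)-I\bar\theta_i\bigr)\le 0$, and then solves it via the root formula to obtain the \emph{upper} bound $2X\le\tilde b+\sqrt{\Delta^*}$ (and analogously $2\sqrt{\sum_v\Pi_v^2}\le\tilde b+\sqrt{\Delta}$ at the Nash profile). These upper bounds on the payment aggregates are exactly what make the later Cauchy--Schwarz estimates on $\sum_v b(v)\Pi_v$, $\sum_v b(v)\Pi_v^*$ and $\sum_v\Pi_v^*\Pi_v$ point in the right direction and produce the $\tfrac12\sqrt{\Delta\Delta^*}+\tfrac{\tilde b}{2}(\sqrt{\Delta}+\sqrt{\Delta^*})+\tfrac{\tilde b^2}{2}$ block of \eqref{eqn:lemma1_0}. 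Your observation that one must check $\Delta,\Delta^*\ge 0$ for the square roots to make sense is fair (the paper asserts it rather than proves it), but it does not repair the directional error; to fix the proposal you must replace the ``completing the square'' identity by the quadratic-inequality/discriminant argument.
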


\begin{proof}
    At social optimum, the pricing mechanism is given by
        \begin{multline}
            \tau ^ *(v_i ^ *, \pi_i ^ *) = \left( b(v_i ^ *) - \sum_{k \in \mathcal{S}_{v_i ^ *} ^ * \setminus \{i\}} \pi_k ^ * \right) ^ 2 - \left( b(v_i ^ *) - \sum_{i \in \mathcal{S}_{v_i ^ *} ^ *} \pi_i ^ * \right) ^ 2 \\
            = \left( b(v_i ^ *) + \sum_{k \in \mathcal{S}_{v_i ^ *} ^ * \setminus\{i\}} \pi_k ^ * \right) ^ 2 - \left( b(v_i ^ *) - \pi_i ^ * - \sum_{k \in \mathcal{S}_{v_i ^ *} ^ * \setminus \{i\}} \pi_k ^ * \right) ^ 2 \\
            = 2 \pi_i ^ * b(v_i ^ *) - (\pi_i ^ *) ^ 2 - 2 \pi_i ^ * \sum_{k \in \mathcal{S}_{v_i ^ *} ^ * \setminus\{i\}} \pi_k ^ *.
        \end{multline}
    Summing over all travelers now gives
        \begin{align}
            & \sum_{i \in \mathcal{I}} \left[ 2 \pi_i ^ * b(v_i ^ *) - (\pi_i ^ *) ^ 2 - 2 \pi_i ^ * \sum_{k \in \mathcal{S}_{v_i ^ *} ^ * \setminus\{i\}} \pi_k ^ * \right] \notag \\
            & = 2 \sum_{i \in \mathcal{I}} \pi_i ^ * b(v_i ^ *) - \sum_{v \in \mathcal{V}} \left( 2 \left( \sum_{i \in \mathcal{S}_{v ^ *} ^ *} \pi_i ^ * \right) ^ 2 \right) + \sum_{i \in \mathcal{I}} (\pi_i ^ *) ^ 2 \notag \\
            & = 2 \sum_{v \in \mathcal{V}} b(v) \sum_{i \in \mathcal{S}_{v_i ^ *} ^ *} \pi_i ^ * - \sum_{v \in \mathcal{V}} \left( 2 \left( \sum_{i \in \mathcal{S}_{v ^ *} ^ *} \pi_i ^ * \right) ^ 2 \right) + \sum_{i \in \mathcal{I}} (\pi_i ^ *) ^ 2. \label{eqn:lemma1_1}
        \end{align}
    So, we use the Cauchy-Schwarz inequality to bound \eqref{eqn:lemma1_1}, i.e.,
        \begin{multline}\label{eqn:lemma1_15}
            J_3 (a ^ *) - \sum_{i \in \mathcal{I}} \frac{\bar{\theta}_i}{\theta_i + \eta_i \pi_i ^ *} \leq 2 \sqrt{\sum_{v \in \mathcal{V}} \left( b(v) \right) ^ 2 \sum_{v \in \mathcal{V}} \left( \sum_{i \in \mathcal{S}_{v ^ *} ^ *} \pi_i ^ * \right) ^ 2} \\
            - 2 \sum_{v \in \mathcal{V}} \left( \sum_{i \in \mathcal{S}_{v ^ *} ^ *} \pi_i ^ * \right) ^ 2 + \sum_{i \in \mathcal{I}} (\pi_i ^ *) ^ 2.
        \end{multline}
    For any traveler $i$, it is always true that $\frac{\bar{\theta}_i}{\theta_i + \eta_i \pi_i ^ *} > 0$, $\pi_i ^ * \in [0, \bar{\theta}_i]$, and also $\tilde{b} ^ 2 = \sum_{v \in \mathcal{V}} ( b(v) ) ^ 2$. Thus, \eqref{eqn:lemma1_15} simplifies to
        \begin{multline}\label{eqn:lemma1_2}
            \sum_{v \in \mathcal{V}} \left( \sum_{i \in \mathcal{S}_{v ^ *} ^ *} \pi_i ^ * \right) ^ 2 - \tilde{b} \sqrt{\sum_{v \in \mathcal{V}} \left( \sum_{i \in \mathcal{S}_{v ^ *} ^ *} \pi_i ^ * \right) ^ 2} \\
            - \frac{1}{2} \left( J_3(a ^ *) - I \cdot \bar{\theta}_i \right) \leq 0,
        \end{multline}
    Note that \eqref{eqn:lemma1_2} is a second-order polynomial with respect to $\sqrt{\sum_{v \in \mathcal{V}} \left( \sum_{i \in \mathcal{S}_{v ^ *} ^ *} \pi_i ^ * \right) ^ 2}$. Thus, we compute the discriminant $\Delta ^ * = \tilde{b} ^ 2 + 2 \left( J_3(a ^ *) - I \cdot \bar{\theta}_i \right)$, where $\Delta ^ *$ denotes the discriminant at the social optimum, so clearly $\Delta ^ * \geq 0$. So, from \eqref{eqn:lemma1_2}, we get
        \begin{equation}\label{eq:OPT_delta}
            2 \sqrt{\sum_{v \in \mathcal{V}} \left( \sum_{i \in \mathcal{S}_{v ^ *} ^ *} \pi_i ^ * \right) ^ 2} \leq \tilde{b} + \sqrt{\Delta ^ *}.
        \end{equation}
    Now, we need to follow the same steps to obtain a similar relation as \eqref{eq:OPT_delta} for a NE. Hence, we have
        \begin{equation}\label{eqn:NE_delta}
            2 \sqrt{\sum_{v \in \mathcal{V}} \left( \sum_{k \in \mathcal{S}_{v}} \pi_k \right) ^ 2} \leq \tilde{b} + \sqrt{\Delta},
        \end{equation}
    where $\Delta = \tilde{b} ^ 2 + 2 \left( J_3(a ^ {\text{Nash}}) - I \cdot \bar{\theta}_i \right)$. Next, the LHS of \eqref{eqn:lemma1_0}, if expanded, can be simplified as follows:
        \begin{multline}
            \sum_{i \in \mathcal{I}} \Bigg[ \left( b(v_i ^ *) - \sum_{k \in \mathcal{S}_{v_i ^ *} \setminus\{i\}} \pi_k \right) ^ 2 \\
            - \left( b(v_i ^ *) - \pi_i ^ * - \sum_{k \in \mathcal{S}_{v_i ^ *} \setminus\{i\}} \pi_k \right) ^ 2 - \frac{\bar{\theta}_i}{\theta_i + \eta_i \pi_i ^ *} \Bigg] \\
            = 2 \sum_{i \in \mathcal{I}} \left( b(v_i ^ *) \pi_i ^ * - \pi_i ^ * \sum_{k \in \mathcal{S}_{v_i ^ *} \setminus\{i\}} \pi_k - \frac{1}{2} (\pi_i ^ *) ^ 2 \right) \\
            - \sum_{i \in \mathcal{I}} \frac{\bar{\theta}_i}{\theta_i + \eta_i \pi_i ^ *}.
        \end{multline}
        \begin{multline}
            J_3(a ^ *) - 2 \sum_{i \in \mathcal{I}} \pi_i ^ * \sum_{k \in \mathcal{S}_{v_i ^ *} \setminus\{i\}} \pi_k + 2 \sum_{i \in \mathcal{I}} \pi_i ^ * \sum_{k \in \mathcal{S}_{v_i ^ *} ^ * \setminus\{i\}} \pi_k ^ * \\
            = J_3(a ^ *) - 2 \sum_{i \in \mathcal{I}} \pi_i ^ * \left[ \sum_{k \in \mathcal{S}_{v_i ^ *} \setminus\{i\}} \pi_k - \sum_{k \in \mathcal{S}_{v_i ^ *} ^ * \setminus\{i\}} \pi_k ^ * \right] \\
            = J_3(a ^ *) - 2 \sum_{i \in \mathcal{I}} \pi_i ^ * \left( \sum_{k \in \mathcal{S}_{v_i ^ *}} \pi_k ^ * - \pi_i(v_i ^ *) - \sum_{k \in \mathcal{S}_{v_i ^ *} ^ *} \pi_k ^ * + \pi_i ^ * \right), \label{eqn:lemma2_2.5}
        \end{multline}
    where $\pi_i(v_i ^ *)$ denotes traveler $i$'s payment at an optimal $v_i ^ *$, and thus, \eqref{eqn:lemma2_2.5} can be simplified by noting that
        \begin{multline}
            2 \sum_{i \in \mathcal{I}} \pi_i ^ * \left( \sum_{k \in \mathcal{S}_{v_i ^ *}} \pi_k ^ * - \pi_i(v_i ^ *) - \sum_{k \in \mathcal{S}_{v_i ^ *} ^ *} \pi_k ^ * + \pi_i ^ * \right) = \\
            2 \sum_{i \in \mathcal{I}} \left[ (\pi_i ^ *) ^ 2 - \pi_i ^ * \pi_i(v_i ^ *) \right] + 2 \sum_{v \in \mathcal{V}} \sum_{k \in \mathcal{S}_{v ^ *} ^ *} \pi_k ^ * \sum_{k \in \mathcal{S}_{v}} \pi_k \\
            - 2 \sum_{v \in \mathcal{V}} \left( \sum_{k \in \mathcal{S}_{v ^ *} ^ *} \pi_k ^ * \right) ^ 2, \label{eqn:lemma2_2.6}
        \end{multline}
    Using \eqref{eqn:lemma2_2.6}, we impose an upper bound to \eqref{eqn:lemma2_2.5} as follows
        \begin{multline}
            J_3(a ^ *) - 2 \sum_{i \in \mathcal{I}} \pi_i ^ * \left( \sum_{k \in \mathcal{S}_{v_i ^ *}} \pi_k ^ * - \pi_i(v_i ^ *) - \sum_{k \in \mathcal{S}_{v_i ^ *} ^ *} \pi_k ^ * + \pi_i ^ * \right) \\
            \leq J_3(a ^ *) - 4 I \bar{\theta}_i - 2 \sum_{v \in \mathcal{V}} \left[ \sum_{k \in \mathcal{S}_{v ^ *} ^ *} \pi_k ^ * \sum_{k \in \mathcal{S}_{v}} \pi_k \right]. \label{eqn:lemma2_3}
        \end{multline}
    We continue by upper bounding the summations in \eqref{eqn:lemma2_3}:
        \begin{align}
            & \sum_{v \in \mathcal{V}} \left[ \sum_{k \in \mathcal{S}_{v ^ *} ^ *} \pi_k ^ * \sum_{k \in \mathcal{S}_{v}} \pi_k \right] \leq \sqrt{\sum_{v \in \mathcal{V}} \left( \sum_{k \in \mathcal{S}_{v ^ *} ^ *} \pi_k ^ * \right) ^ 2 \sum_{v \in \mathcal{V}} \left( \sum_{k \in \mathcal{S}_{v}} \pi_k \right) ^ 2} \notag \\
            & \leq \frac{1}{2} \left( \sqrt{\tilde{b} ^ 2} + \sqrt{\Delta} \right) \left( \sqrt{\tilde{b} ^ 2} + \sqrt{\Delta ^ *} \right) \notag \\
            & = \frac{1}{2} \sqrt{\Delta \Delta ^ *} + \frac{\tilde{b}}{2} \left( \sqrt{\Delta} + \sqrt{\Delta ^ *} \right) + \frac{\tilde{b} ^ 2}{2}, \label{eqn:lemma1_final}
        \end{align}
    by the Cauchy-Schwartz inequality and relations \eqref{eq:OPT_delta} and \eqref{eqn:NE_delta}. Finally, we substitute $\Delta = \tilde{b} ^ 2 + 2 (J_3(a ^ {\text{Nash}}) - I \bar{\theta}_i)$ and $\Delta ^ * = \tilde{b} ^ 2 + 2 (J_3(a ^ *) - I \bar{\theta}_i)$ into \eqref{eqn:lemma1_final} and with \eqref{eqn:lemma2_3} we obtain the desired bound.
\end{proof}

\section{Proof of Lemma \ref{lemma2}}
\label{appendix_3}

\begin{lemma}\label{lemma2}
    We have
        \begin{multline}\label{eqn:lemma2-0}
            \frac{J(a ^ *)}{J(a ^ {\text{Nash}})} \leq 1 - \sqrt{ \left( \frac{\tilde{b} ^ 2}{I} + \frac{2 J(a ^ *)}{J(a ^ {\text{Nash}})} \right) \left( \frac{\tilde{b} ^ 2}{I} + 2 \right)} + \frac{\tilde{b} ^ 2}{I} \\
            - 4 \bar{\theta}_i - \frac{\tilde{b}}{\sqrt{I}} \left( \sqrt{\frac{\tilde{b} ^ 2}{I} + \frac{2 J(a ^ *)}{J(a ^ {\text{Nash}})}} + \sqrt{\frac{\tilde{b} ^ 2}{I} + 2} \right).
        \end{multline}
\end{lemma}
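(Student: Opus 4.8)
The plan is to obtain \eqref{eqn:lemma2-0} from Lemma \ref{lemma1} by a normalization step, combined with the summed Nash-equilibrium inequality of Definition \ref{defn:NE} applied to unilateral deviations toward the social optimum. First I would note that the bracketed summand on the left of \eqref{eqn:lemma1_0} is precisely the pricing-plus-wallet part of $u_i(a_i^*, a_{-i}^{\mathrm{Nash}})$: the payoff traveler $i$ collects when switching to their socially optimal route, hub and payment $a_i^* = (\rho_i^*, v_i^*, \pi_i^*)$ while everyone else stays at the NE. Summing $u_i(a_i^{\mathrm{Nash}}, a_{-i}^{\mathrm{Nash}}) \ge u_i(a_i^*, a_{-i}^{\mathrm{Nash}})$ over $i \in \mathcal{I}$ and using the linearity of the latency functions $c_e$ (Definition \ref{defn:linear_time_latency}) together with nonnegativity of the waiting terms to control the congestion/waiting contribution of the simultaneous deviations, one bounds $J(a^*)$ above by the left-hand side of \eqref{eqn:lemma1_0} plus a remainder that is itself controlled by $J(a^{\mathrm{Nash}})$; this ``smoothness'' estimate is where the additive constant that becomes the leading $2$ of Theorem \ref{thm:PoA} is generated. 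Chaining this with Lemma \ref{lemma1} expresses $J(a^*)$ in terms of $J_3(a^*)$, $J_3(a^{\mathrm{Nash}})$, $\tilde b$, $I$ and $\bar\theta_i$ only.

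Next I would divide the resulting inequality through by $I$ and by $J(a^{\mathrm{Nash}})$, passing to per-traveler averaged quantities. Under this normalization the bare $\tilde b^2$ and $4 I \bar\theta_i$ from Lemma \ref{lemma1} become $\tilde b^2/I$ and $4\bar\theta_i$, and the coefficients $\tilde b$ multiplying the two square roots become $\tilde b/\sqrt I$ once the factor $1/I$ is pulled inside the radicals. The discriminants $\Delta^* = \tilde b^2 + 2(J_3(a^*) - I\bar\theta_i)$ and $\Delta = \tilde b^2 + 2(J_3(a^{\mathrm{Nash}}) - I\bar\theta_i)$ are then rewritten using $J_3 \ge J$ (the latency and waiting terms subtracted off in $J$ are nonnegative), $\pi_i^* \in [0, \bar\theta_i]$ and $g > 0$ — exactly the crude bounds already invoked inside the proof of Lemma \ref{lemma1} — so that $\Delta^*/I$ collapses to $\tilde b^2/I + 2 J(a^*)/J(a^{\mathrm{Nash}})$ and $\Delta/I$ to $\tilde b^2/I + 2$. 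Substituting these into the normalized inequality and collecting terms gives exactly \eqref{eqn:lemma2-0}; the self-referential appearance of $J(a^*)/J(a^{\mathrm{Nash}})$ on both sides is intentional, as Theorem \ref{thm:PoA} will subsequently treat \eqref{eqn:lemma2-0} as an inequality to be solved for that ratio.

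The part I expect to be the main obstacle is the bookkeeping of this normalization: one must verify that both discriminants stay nonnegative after dividing by $I$ and $J(a^{\mathrm{Nash}})$, so that the square roots in \eqref{eqn:lemma2-0} remain real, and one must apply the $J_3 \leftrightarrow J$ and $\pi_i^* \le \bar\theta_i$ replacements with the correct sign in each occurrence — in particular $J_3(a^*)$ shows up both as the leading term and inside a radical, and a careless bound in one location would render \eqref{eqn:lemma2-0} inconsistent with Lemma \ref{lemma1}. The only genuinely non-mechanical ingredient is the estimate for the congestion and waiting costs of the simultaneous unilateral deviations; everything else is the algebra of homogenizing Lemma \ref{lemma1} and recognizing the two averaged discriminants.
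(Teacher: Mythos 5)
Your proposal is correct in spirit, but it is not the argument the paper assigns to Lemma \ref{lemma2}; you have effectively swapped the division of labor between Lemma \ref{lemma2} and Theorem \ref{thm:PoA}. What you outline --- summing the Nash condition of Definition \ref{defn:NE} over unilateral deviations to $a_i^*$, bounding the congestion and waiting contributions via linearity and Cauchy--Schwarz, invoking Lemma \ref{lemma1} for the pricing-plus-wallet block, and then normalizing by $J(a^{\text{Nash}})$ --- is precisely what the paper carries out in the proof of Theorem \ref{thm:PoA} (equations \eqref{eqn:thm2_1}, \eqref{eqn:thm_PoA_2}, \eqref{eqn:thm2_J1}, \eqref{eqn:thm2_J2}, \eqref{eqn:thm_PoA_3}, \eqref{eqn:thm2-eqn3}, \eqref{eqn:thm2-eqn4}), culminating in exactly the display \eqref{eqn:lemma2-0}. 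The paper's own proof of Lemma \ref{lemma2} runs in the opposite direction: it takes \eqref{eqn:lemma2-0} as the starting point, isolates the radical in \eqref{eqn:lemma2-1}, squares to obtain the quadratic \eqref{eqn:lemma2-23} in the ratio $J(a^*)/J(a^{\text{Nash}})$ with coefficients $D$ and $E$, and solves it to extract the explicit bound in \eqref{eqn:lemma2-2} and its simplifications --- i.e., the ``solve the self-referential inequality'' step that you explicitly defer to Theorem \ref{thm:PoA}. Your reading makes Lemma \ref{lemma2} a standalone estimate derived from primitives and leaves only algebra for the theorem; the paper's organization makes Lemma \ref{lemma2} the algebra and leaves the derivation of \eqref{eqn:lemma2-0} inside the theorem's proof. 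Both routes assemble the same two ingredients, so neither is more powerful, but you should be aware that a referee comparing your text to Appendix \ref{appendix_3} would find an entirely different computation there.

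One caveat on the step you yourself flag as the main obstacle: the normalization is genuinely not mechanical. Dividing \eqref{eqn:thm2-eqn4} by $J(a^{\text{Nash}})$ produces terms such as $\tilde{b}^2/J(a^{\text{Nash}})$ and $4I\bar{\theta}_i/J(a^{\text{Nash}})$, and converting these into the $\tilde{b}^2/I$, $4\bar{\theta}_i$ and $\tilde{b}/\sqrt{I}$ appearing in \eqref{eqn:lemma2-0} requires an additional relation between $J(a^{\text{Nash}})$ and $I$ (a per-traveler normalization or a lower bound $J(a^{\text{Nash}}) \geq I$), together with sign information guaranteeing the direction of the inequality is preserved; your phrase ``divide through by $I$ and by $J(a^{\text{Nash}})$'' glosses over this, just as the paper does. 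Similarly, the replacement of $J_3$ by $J$ inside the radicals must be justified termwise using $J \leq J_3$ and the negative sign in front of those radicals (the paper uses the inequality $\sqrt{\gamma_1\gamma_2}+\sqrt{\gamma_3\gamma_4} \leq \sqrt{(\gamma_1+\gamma_3)(\gamma_2+\gamma_4)}$ for this consolidation). If you intend your version to stand as a complete proof of the lemma as stated, these two points are where the remaining work lies.
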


\begin{proof}
    We can show this result by expanding and rearranging \eqref{eqn:lemma2-0} to obtain a simplified relation. So, we have
        \begin{multline}\label{eqn:lemma2-1}
            \frac{J(a ^ *)}{J(a ^ {\text{Nash}})} - \left( 1 + \frac{\tilde{b} ^ 2}{I} - 4 \bar{\theta}_i \right) + \frac{\tilde{b}}{\sqrt{I}} \sqrt{\frac{\tilde{b} ^ 2}{I} + 2} \\
            \leq - \left( \sqrt{\frac{\tilde{b} ^ 2}{I} + 2} + \frac{\tilde{b}}{\sqrt{I}} \right) \sqrt{\frac{\tilde{b} ^ 2}{I} + \frac{2 J(a ^ *)}{J(a ^ {\text{Nash}})}}.
        \end{multline}
    We seek to solve for $\frac{J(a ^ *)}{J(a ^ {\text{Nash}})}$, so we remove the square roots by squaring both sides of \eqref{eqn:lemma2-1}, i.e.,
        \begin{equation}\label{eqn:lemma2-23}
            \frac{J(a ^ *)}{J(a ^ {\text{Nash}})} - 2(D + E ^ 2) \frac{J(a ^ *)}{J(a ^ {\text{Nash}})} + \left( D ^ 2 - \frac{\tilde{b} ^ 2 E ^ 2}{I} \right) \leq 0,
        \end{equation}
    where $E = \sqrt{\frac{\tilde{b} ^ 2}{I} + 2} + \frac{\tilde{b}}{\sqrt{I}}$ and $D = \left( 1 + \frac{\tilde{b} ^ 2}{I} - 4 \bar{\theta}_i \right) - \frac{\tilde{b}}{\sqrt{I}} \sqrt{\frac{\tilde{b} ^ 2}{I} + 2}$. We solve \eqref{eqn:lemma2-23} by noting the positivity of the coefficients to obtain
        \begin{equation}\label{eqn:lemma2-2}
            \frac{J(a ^ *)}{J(a ^ {\text{Nash}})} \leq E ^ 2 + D + E \sqrt{E ^ 2 + 2 D + \frac{\tilde{b} ^ 2}{I}}.
        \end{equation}
    We observe that $E ^ 2 + 2 D + \frac{\tilde{b} ^ 2}{I} \leq \left( E + \frac{D}{E} + \frac{\tilde{b} ^ 2}{2 E I} \right) ^ 2$, and so, an upper bound exists for \eqref{eqn:lemma2-2}. Thus, we have $\frac{J(a ^ *)}{J(a ^ {\text{Nash}})} \leq 2 E ^ 2 + 2 D + \frac{\tilde{b} ^ 2}{2 I}$. We substitute back both $E$ and $D$ and get
        \begin{align}
            \frac{J(a ^ *)}{J(a ^ {\text{Nash}})} & \leq 2 + \frac{3 \tilde{b} ^ 2}{2 I} + \frac{3 \tilde{b}}{\sqrt{I}} \sqrt{\frac{\tilde{b} ^ 2}{I} + 2} + \frac{5 \tilde{b} ^ 2}{2 I} \notag \\
            & = 2 + \frac{4 \tilde{b} ^ 2}{I} + \frac{ \tilde{b}}{\sqrt{I}} \sqrt{\frac{\tilde{b} ^ 2}{I} + 2}.
        \end{align}
    Hence, since $\frac{\tilde{b} ^ 2}{I} + 2 \leq \left( \frac{\tilde{b}}{\sqrt{I}} + \frac{\sqrt{I}}{\tilde{b}} \right) ^ 2$, the result follows.
\end{proof}

\section{Proof of Theorem \ref{thm:PoA}}
\label{appendix_4}

\begin{proof}
    From Definition \ref{defn:NE}, for some arbitrary traveler $i \in \mathcal{I}$, it is clear that $u_i(a_i ^ {\text{Nash}}, a_{- i}) \leq u_i(a_i ^ *, a_{- i})$, so if we expand the RHS of it, we get
        \begin{multline}\label{eqn:thm2_1}
            u_i(a_i ^ *, a_{- i}) = \left( b(v_i ^ *) - \sum_{k \in \mathcal{S}_{v_i ^ *} \setminus \{i\}} \pi_k \right) ^ 2 \\
            - \left( b(v_i ^ *) - \pi_i ^ * - \sum_{k \in \mathcal{S}_{v_i ^ *} \setminus \{i\}} \pi_k \right) ^ 2 - \sum_{e \in \rho_i ^ * \setminus \rho_i} c_e(J_e + 1) \\
            - \sum_{e \in \rho_i ^ * \cap \rho_i} c_e(J_e) - \frac{|\mathcal{S}_{v_i ^ *}|}{\sigma(v_i ^ *, h_i)} - \frac{\bar{\theta}_i}{\theta_i + \eta_i \pi_i ^ *},
        \end{multline}
    where $b(v_i ^ *)$ is the budget at an optimal $v_i ^ *$ and $\mathcal{S}_{v_i ^ *} = \{k \in \mathcal{I} \; | \; v_k ^ * = v_i ^ *\}$. Summing over all travelers in \eqref{eqn:thm2_1}, and keeping note of $u_i(a_i ^ {\text{Nash}}, a_{- i}) \leq u_i(a_i ^ *, a_{- i})$ yields
        \begin{equation}\label{eqn:thm_PoA_2}
            J(a ^ {\text{Nash}}) \leq \sum_{i \in \mathcal{I}} u_i(a_i ^ *, a_{- i}).
        \end{equation}
    At this point, we recall that the travel time latency functions are linear. Thus, we can write
        \begin{align}
            J_1(a ^ {\text{Nash}}) & = \sum_{i \in \mathcal{I}} \sum_{e \in \rho_i ^ {\text{Nash}}} c_e(J_e ^ {\text{Nash}}) = \sum_{e \in \mathcal{E}} J_e ^ {\text{Nash}} (\xi_1 J_e ^ {\text{Nash}} + \xi_2), \\
            J_1(a ^ *) & = \sum_{i \in \mathcal{I}} \sum_{e \in \rho_i ^ *} c_e(J_e ^ *) = \sum_{e \in \mathcal{E}} J_e ^ *(\xi_1 J_e ^ * + \xi_2),
        \end{align}    
    where $J_e ^ {\text{Nash}}$ and $J_e ^ *$ denote the number of services on road $e$ at a NE and social optimum, respectively. Inspired from \cite{Awerbuch2005}, we impose an upper bound on each component of the RHS of \eqref{eqn:thm_PoA_2}, and thus we have the following
        \begin{multline}
            \sum_{i \in \mathcal{I}} \left( \sum_{e \in \rho_i ^ * \setminus \rho_i} c_e(J_e + 1) + \sum_{e \in \rho_i ^ * \cap \rho_i} c_e(J_e) \right) \\
            \leq \sum_{i \in \mathcal{I}} \sum_{e \in \rho_i ^ *} c_e(J_e + 1) = \sum_{e \in \mathcal{E}} \xi_1 J_e ^ * J_e + J_e ^ *(\xi_1 + \xi_2) \notag
        \end{multline}
        \begin{align}
            & \leq \sqrt{\sum_{e \in \mathcal{E}} \xi_1 J_e ^ 2 \xi_1 (J_e ^ *) ^ 2} + \sum_{e \in \mathcal{E}} J_e ^ * (\xi_1 J_e ^ * + \xi_2) \notag \\
            & \leq \sqrt{\sum_{e \in \mathcal{E}} (\xi_1 J_e ^ 2 + \xi_2 J_e) (\xi_1 (J_e ^ *) ^ 2 + \xi_2 J_e ^ *)} + \sum_{e \in \mathcal{E}} J_e ^ * (\xi_1 J_e ^ * + \xi_2) \notag \\
            & = \sqrt{J_1(a ^ {\text{Nash}}) \times J_1(a ^ *)} + J_1(a ^ *), \label{eqn:thm2_J1}
        \end{align}
    where we simplified the notation as $J_e ^ {\text{Nash}} = J_e$, used $c_e(J_e) \leq c_e(J_e + 1)$ for each $e \in \rho_i ^ * \cap \rho_i$, and applied the Cauchy-Schwarz inequality twice. Note that $\xi_1 J_e ^ * + \xi_2 \leq \xi_1 (J_e ^ *) ^ 2 + \xi_2 J_e ^ *$ at any $e \in \mathcal{E}$. Next, we introduce notation: $J_2(a ^ {\text{Nash}}) = \sum_{i \in \mathcal{I}} \frac{|\mathcal{S}_{v_i} ^ {\text{Nash}}|}{\sigma(v_i ^ {\text{Nash}}, h_i)}$ and $J_2(a ^ *) = \sum_{i \in \mathcal{I}} \frac{|\mathcal{S}_{v_i ^ *}|}{\sigma(v_i ^ *, h_i)}$. We have
        \begin{equation}\label{eqn:thm2_J2}
            \sum_{i \in \mathcal{I}} \frac{|\mathcal{S}_{v_i ^ *}|}{\sigma(v_i ^ *, h_i)} \leq \sqrt{J_2(a ^ *) J_2(a ^ {\text{Nash}})} + J_2(a ^ *).
        \end{equation}
    Now we introduce the following notation:
        \begin{multline}
            J_3(a ^ {\text{Nash}}) = \sum_{i \in \mathcal{I}} \Bigg[ \left( b(v_i ^ {\text{Nash}}) - \sum_{k \in \mathcal{S}_{v_i} ^ {\text{Nash}} \setminus \{i\}} \pi_k ^ {\text{Nash}} \right) ^ 2 \\
            - \left( b(v_i ^ {\text{Nash}}) - \sum_{i \in \mathcal{S}_{v_i} ^ {\text{Nash}}} \pi_i ^ {\text{Nash}} \right) ^ 2 - \frac{\bar{\theta}_i}{\theta_i + \eta_i \pi_i ^ {\text{Nash}}} \Bigg],
        \end{multline}
    where $b(v_i ^ {\text{Nash}})$ is the budget at a $v_i ^ {\text{Nash}}$, and
        \begin{multline}\label{eqn:J3_defn}
            J_3(a ^ *) = \sum_{i \in \mathcal{I}} \Bigg[ \left( b(v_i ^ *) - \sum_{k \in \mathcal{S}_{v_i ^ *} ^ * \setminus \{i\}} \pi_k ^ * \right) ^ 2 \\
            - \left( b(v_i ^ *) - \sum_{i \in \mathcal{S}_{v_i ^ *} ^ *} \pi_i ^ * \right) ^ 2 - \frac{\bar{\theta}_i}{\theta_i + \eta_i \pi_i ^ *} \Bigg].
        \end{multline}
    By Lemma \ref{lemma1} (see Appendix \ref{appendix_2}), we have a bound for the first two components of \eqref{eqn:thm_PoA_2}, thus
        \begin{multline}\label{eqn:thm_PoA_3}
            \sum_{i \in \mathcal{I}} \Bigg[ \left( b(v_i ^ *) - \sum_{k \in \mathcal{S}_{v_i ^ *} \setminus \{i\}} \pi_k \right) ^ 2
            \\ - \left( b(v_i ^ *) - \pi_i ^ * - \sum_{k \in \mathcal{S}_{v_i ^ *} \setminus \{i\}} \pi_k \right) ^ 2 - \frac{\bar{\theta}_i}{\theta_i + \eta_i \pi_i ^ *} \Bigg] \\
            \leq J_3(a ^ *) - \sqrt{(\tilde{b} ^ 2 + 2 (J_3(a ^ {\text{Nash}}) - I \bar{\theta}_i)) (\tilde{b} ^ 2 + 2 ( J_3(a ^ *) - I \bar{\theta}_i))} \\
            - 4 I \bar{\theta}_i - \tilde{b} ^ 2 \\
            - \tilde{b} \left( \sqrt{\tilde{b} ^ 2 + 2 (J_3(a ^ {\text{Nash}}) - I \bar{\theta}_i)} + \sqrt{\tilde{b} ^ 2  + 2 (J_3(a ^ *) - I \bar{\theta}_i)} \right),
        \end{multline}
    We combine all relations together \eqref{eqn:thm2_J1}, \eqref{eqn:thm2_J2}, and \eqref{eqn:thm_PoA_3} and substitute them into \eqref{eqn:thm_PoA_2} to get
        \begin{multline}\label{eqn:thm2-eqn3}
            J(a ^ {\text{Nash}}) \leq \sqrt{J_1(a ^ {\text{Nash}}) J_1(a ^ *)} + J_1(a ^ *) \\
            + \sqrt{J_2(a ^ *) J_2(a ^ {\text{Nash}})} + J_2(a ^ *) - \tilde{b} ^ 2 + J_3(a ^ *) \\
             + \sqrt{(\tilde{b} ^ 2 + 2 (J_3(a ^ {\text{Nash}}) - I \bar{\theta}_i) (\tilde{b} ^ 2 + 2 (J_3(a ^ *) - I \bar{\theta}_i))} \\
            - \tilde{b} \left( \sqrt{\tilde{b} ^ 2 + 2 (J_3(a ^ {\text{Nash}}) - I \bar{\theta}_i)} + \sqrt{\tilde{b} ^ 2 + 2 (J_3(a ^ *) - I \bar{\theta}_i)} \right) \\
            \leq J(a ^ *) - \tilde{b} \left( \sqrt{\tilde{b} ^ 2 + 2 J_3(a ^ {\text{Nash}})} + \sqrt{\tilde{b} ^ 2 + 2 J_3(a ^ *)} \right) \\
            - \sqrt{ \left( \tilde{b} ^ 2 + 2 \left( \sum_{k = 1} ^ 3 J_k(a ^ *) \right) \right) \left( \tilde{b} ^ 2 + 2 \left( \sum_{k = 1} ^ 3 J_k(a ^ {\text{Nash}}) \right) \right)} \\
            - 4 I \bar{\theta}_i - \tilde{b} ^ 2,
        \end{multline}
    where we have used the fact that for any four numbers $(\gamma_k \in \mathbb{R}_{> 0})$, $k = 1, 2, 3, 4$, we have $\sqrt{\gamma_1 \gamma_2} + \sqrt{\gamma_3 \gamma_4} \leq \sqrt{(\gamma_1 + \gamma_3) (\gamma_2 + \gamma_4)}$. Then \eqref{eqn:thm2-eqn3} leads to
        \begin{multline}
            J(a ^ *) - \sqrt{(\tilde{b} ^ 2 + 2 J(a ^ *)) (\tilde{b} ^ 2 + 2 J(a ^ {\text{Nash}}))} \\
             - \tilde{b} \left( \sqrt{\tilde{b} ^ 2 + 2 J_3(a ^ {\text{Nash}})} + \sqrt{\tilde{b} ^ 2 + 2 J_3(a ^ *)} \right) - 4 I \bar{\theta}_i - \tilde{b} ^ 2 \\
            \leq J(a ^ *) - \sqrt{(\tilde{b} ^ 2 + 2 J(a ^ *)) (\tilde{b} ^ 2 + 2 J(a ^ {\text{Nash}}))} \\
             - \tilde{b} \left( \sqrt{\tilde{b} ^ 2 + 2 J(a ^ {\text{Nash}})} + \sqrt{\tilde{b} ^ 2 + 2 J(a ^ *)} \right) - 4 I \bar{\theta}_i - \tilde{b} ^ 2.
        \end{multline}
    So, we have the following after a simple rearrangement
        \begin{multline}\label{eqn:thm2-eqn4}
            J(a ^ *) \leq J(a ^ {\text{Nash}}) - \sqrt{(\tilde{b} ^ 2 + 2 J(a ^ *)) (\tilde{b} ^ 2 + 2 J(a ^ {\text{Nash}}))} \\
            - \tilde{b} \left( \sqrt{\tilde{b} ^ 2 + 2 J(a ^ {\text{Nash}})} + \sqrt{\tilde{b} ^ 2 + 2 J(a ^ *)} \right) - 4 I \bar{\theta}_i - \tilde{b} ^ 2.
        \end{multline}
    We divide both sides of \eqref{eqn:thm2-eqn4} by $J(a ^ {\text{Nash}})$ to obtain
        \begin{multline}
            \frac{J(a ^ *)}{J(a ^ {\text{Nash}})} \leq 1 - \sqrt{ \left( \frac{\tilde{b} ^ 2}{I} + \frac{2 J(a ^ *)}{J(a ^ {\text{Nash}})} \right) \left( \frac{\tilde{b} ^ 2}{I} + 2 \right)} + \frac{\tilde{b} ^ 2}{I} \\
            - 4 \bar{\theta}_i - \frac{\tilde{b}}{\sqrt{I}} \left( \sqrt{\frac{\tilde{b} ^ 2}{I} + \frac{2 J(a ^ *)}{J(a ^ {\text{Nash}})}} + \sqrt{\frac{\tilde{b} ^ 2}{I} + 2} \right),
        \end{multline}
    By Lemma \ref{lemma2}, we reach the desired bound.
\end{proof}

\section{Proof of Theorem \ref{thm:existence_NASH_PT}}
\label{appendix_5}

\begin{proof}
    We expand $z(n)$ and subtract $z_i ^ 0$ and simplify to get
        \begin{align}
            z(n) & = \left( n - \sum_{k \in \mathcal{S}_{v_i} \setminus \{i\}} \pi_k \right) ^ 2 - \left( n - \sum_{i \in \mathcal{S}_{v_i}} \pi_i \right) ^ 2 \notag \\
            z(n) - z_i ^ 0 & = 2 n \left[ \sum_{k \in \mathcal{S}_{v_i} \setminus \{i\}} \pi_k - \sum_{i \in \mathcal{S}_{v_i}} \pi_i \right] \notag \\
            & = 2 n \pi_i. \label{eqn:thmPT_first}
        \end{align}
    where $z_i ^ 0 =  \left( \sum_{k \in \mathcal{S}_{v_i} \setminus \{i\}} \pi_k \right) ^ 2 - \left( \sum_{i \in \mathcal{S}_{v_i}} \pi_i \right) ^ 2$. Substituting \eqref{eqn:thmPT_first} into \eqref{eqn:utility_prospect} yields
        \begin{multline}
            u_i ^ {\text{PT}}(a) = z_i ^ 0 + \sum_{n \in \mathbb{R}} \nu_i \cdot ( 2 n \pi_i) \cdot w_i(f(n)) \\
            - \sum_{e \in \rho_i : \rho_i \in \mathcal{P} ^ {(o, d)}} c_e(J_e) - \frac{|\mathcal{S}_{v_i}|}{\sigma(v_i, h_i)} - \frac{\bar{\theta}_i}{\theta_i + \eta_i \pi_i},
        \end{multline}
    where $v_i$ is given by \eqref{eqn:valuation_prospect}. The next step is to explicitly define a new potential function under prospect theory. We have
        \begin{multline}\label{eqn:thmPT_second}
            \Psi(a) = \sum_{i \in \mathcal{I}} \sum_{n \in \mathbb{R}} \nu_i \cdot ( 2 n \pi_i ) \cdot w_i(f(n)) \\
            - \sum_{e \in \mathcal{E}} \sum_{k = 1} ^ {J_e} c_e(k) - \sum_{v \in \mathcal{V}} \frac{|\mathcal{S}_{v}| (|\mathcal{S}_{v}| + 1)}{2 \sigma(v, h_i)} \\
            - \sum_{i \in \mathcal{I}} \frac{\bar{\theta}_i}{\theta_i + \eta_i \pi_i} + \sum_{v \in \mathcal{V}} \left( \sum_{k \in \mathcal{S}_{v}} \pi_k \right) ^ 2.
        \end{multline}
    Next, we show that $\Psi$ as given in \eqref{eqn:thmPT_second} is an exact potential function. We notice that $\sum_{n \in \mathbb{R}} \nu_i \cdot (2 n \pi_i) \cdot w_i(f(n))$ does not depend on $a_{- i}$, i.e., the actions of the other travelers except traveler $i$. Hence, following similar arguments as in Theorem \ref{thm:existence_NASH}, we obtain $u_i ^ {\text{PT}}(a_i, a_{- i}) - u_i ^ {\text{PT}}(a_i ', a_{- i}) = \Psi(a_i, a_{- i}) - \Psi(a_i ', a_{- i})$. Hence, $\Psi$ is indeed an exact potential function for the mobility game $\mathcal{M}$ under prospect theory. Therefore, since any action profile that minimizes $\Psi$ results in a NE, the mobility game $\mathcal{M}$ admits a NE under prospect theory.
\end{proof}

\bibliographystyle{IEEEtran}
\bibliography{references}

%
%
%
%
%

\begin{IEEEbiography}
    [{
        \includegraphics[width = 1in, height = 1.25in, clip, keepaspectratio]
        {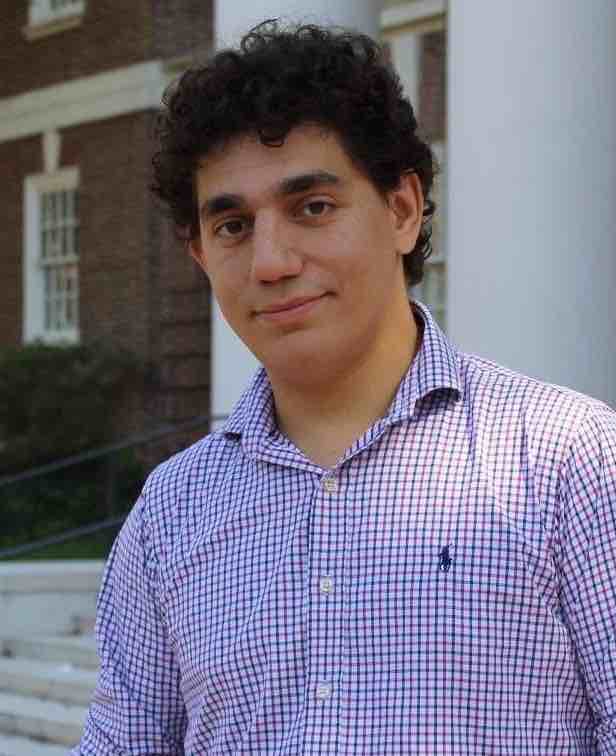}
    }]   
        {Ioannis Vasileios Chremos} (S'18) received the B.S. degree with honours of the First Class in Mathematics from the University of Glasgow, Glasgow, UK in 2017. He is working towards a Ph.D. in the program of Mechanical Engineering at the University of Delaware. His research interests lie broadly in emerging mobility systems, game theory, and mechanism design. Current research projects involve the game-theoretic study of the socioeconomic and strategic interactions of travelers in emerging mobility systems with connected and automated vehicles and the holistic development of a sociotechnical system framework for smart mobility using insights and techniques from behavioral economics and mechanism design. He is also interested in game-theoretic models of social media networks and the study of different theoretical prevention methodologies of the spread of misinformation. He is a student member of SIAM and ASME.
\end{IEEEbiography}

\begin{IEEEbiography}
    [{
        \includegraphics[width=1in,height=1.25in,clip,keepaspectratio]{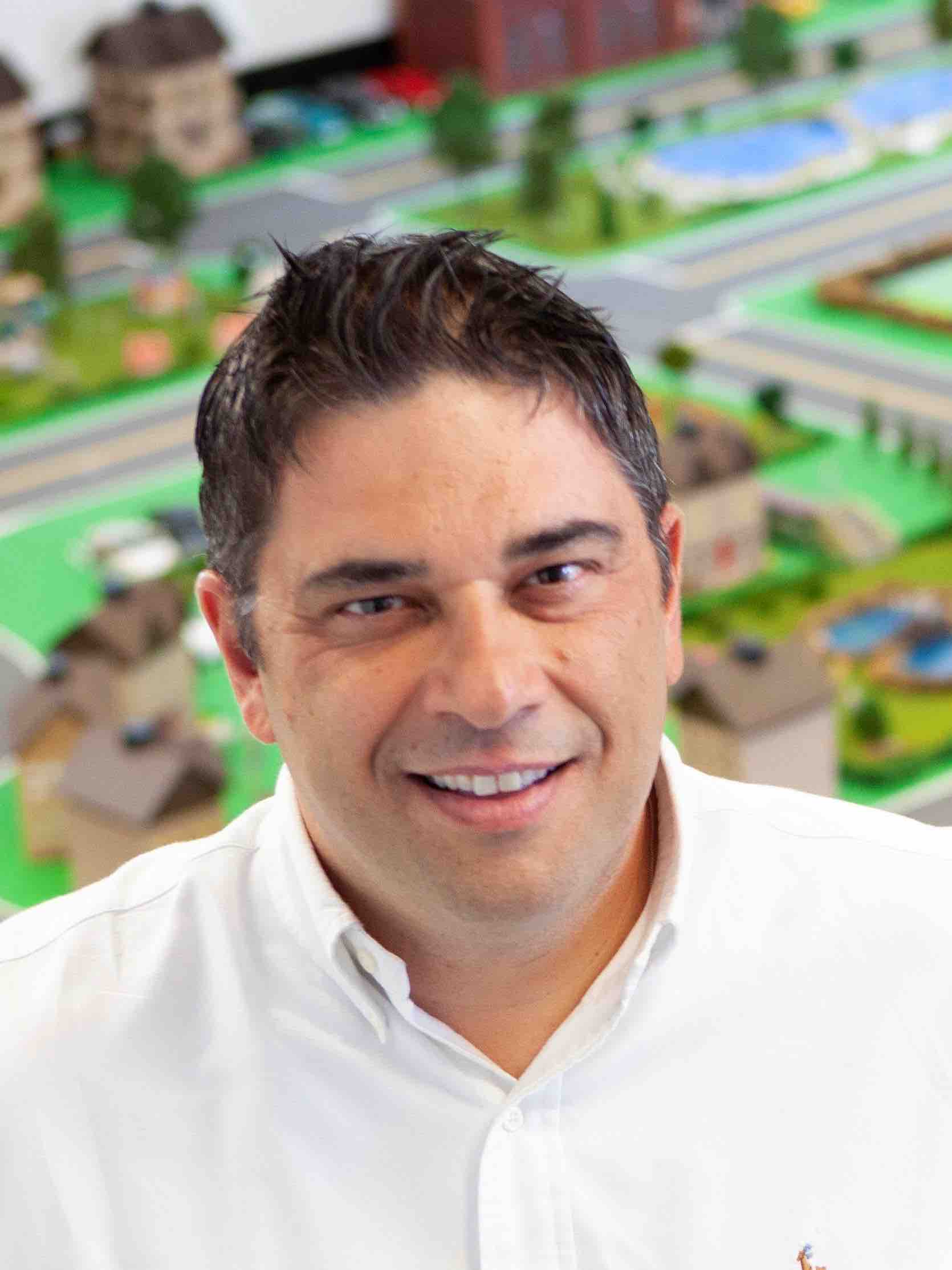}
    }]
        {Andreas A. Malikopoulos} (S'06--M'09--SM'17) received the Diploma in mechanical engineering from the National Technical University of Athens, Greece, in 2000. He received M.S. and Ph.D. degrees from the department of mechanical engineering at the University of Michigan, Ann Arbor, Michigan, USA, in 2004 and 2008, respectively. He is the Terri Connor Kelly and John Kelly Career Development Associate Professor in the Department of Mechanical Engineering at the University of Delaware, the Director of the Information and Decision Science (IDS) Laboratory, and the Director of the Sociotechnical Systems Center. Prior to these appointments, he was the Deputy Director and the Lead of the Sustainable Mobility Theme of the Urban Dynamics Institute at Oak Ridge National Laboratory, and a Senior Researcher with General Motors Global Research \& Development. His research spans several fields, including analysis, optimization, and control of cyber-physical systems; decentralized systems; stochastic scheduling and resource allocation problems; and learning in complex systems. The emphasis is on applications related to smart cities, emerging mobility systems, and sociotechnical systems. He has been an Associate Editor of the IEEE Transactions on Intelligent Vehicles and IEEE Transactions on Intelligent Transportation Systems from 2017 through 2020. He is currently an Associate Editor of Automatica and IEEE Transactions on Automatic Control. He is a member of SIAM, AAAS, and a Fellow of the ASME.
\end{IEEEbiography}

\end{document}